\newtheorem{lemma}{Lemma}
\newtheorem{theorem}{Theorem}
\newtheorem{remark}{Remark}
\theoremstyle{definition}
\newtheorem*{psketch}{Proof Sketch}
\newtheorem*{note}{Note}
 \newtheorem{definition}{Definition}
\newcommand{\modulespace}{\vspace{0.3cm}}
\newcommand{\diam}{D}
\newcommand{\sinit}{s_{\mathrm{init}}}
\newcommand{\nextc}{\mathit{next}}
\newcommand{\aset}{A}
\newcommand{\ie}{{\em i.e.},~}
\newcommand{\eg}{{\em e.g.},~}
\newcommand{\idmax}{\mathrm{id}_{\mathrm{max}}}
\newcommand{\amax}{a_{\mathrm{max}}}
\newcommand{\alast}{a_{\mathrm{last}}}
\newcommand{\lctn}{\nu}
\newcommand{\id}{\mathtt{ID}}
\newcommand{\pin}{\mathtt{pin}}
\newcommand{\pout}{\mathtt{pout}}
\newcommand{\settled}{\mathtt{settled}}
\newcommand{\mode}{\mathtt{mode}}
\newcommand{\parent}{\mathtt{parent}}
\newcommand{\help}{\mathtt{help}}
\newcommand{\found}{\mathtt{found}}
\newcommand{\checked}{\mathtt{checked}}
\newcommand{\nxt}{\mathtt{next}}
\newcommand{\leader}{\mathtt{leader}}
\newcommand{\level}{\mathtt{level}}
\newcommand{\needInit}{\mathtt{InitProbe}}
\newcommand{\done}{\mathtt{done}}
\newcommand{\levelL}{\mathtt{level}_L}
\newcommand{\levelS}{\mathtt{level}_S}
\newcommand{\Probe}{\mathtt{Probe}}
\newcommand{\Settle}{\mathtt{Settle}}
\newcommand{\full}{1}
\newcommand{\emp}{0}
\newcommand{\CO}[1]{\texttt{/***************}\ #1\ \texttt{***************/}\;}
\newcommand{\tr}{\mathbf{true}}
\newcommand{\fl}{\mathbf{false}}
\newcommand{\rootal}{\mathbf{RootedDisp}}
\newcommand{\general}{\mathbf{GeneralDisp}}
\newcommand{\rootopt}{\mathbf{RootedOpt}}
\newcommand{\uemp}{U_{\emp}}
\newcommand{\ubot}{U_{\bot}}
\newcommand{\calA}{\mathcal{A}}
\newcommand{\calM}{\mathcal{M}}
\title{
Near-linear Time Dispersion of Mobile Agents
\thanks{
This work is partly supported by JSPS KAKENHI
20H04140, 
20KK0232, 
21K17706, 
22K11971, 
and 
23K28037, 
and JST FOREST Program JPMJFR226U. 
}
} 
\date{}
\author[1]{Yuichi Sudo\thanks{Corresponding Author: sudo@hosei.ac.jp}}
\author[2]{Masahiro Shibata}
\author[3]{Junya Nakamura}
\author[4]{Yonghwan Kim}
\author[5]{Toshimitsu~Masuzawa}
\affil[1]{Hosei University, Japan}
\affil[2]{Kyusyu Institute of Technology, Japan}
\affil[3]{Toyohashi University of Technology, Japan}
\affil[4]{Nagoya Institute of Technology, Japan}
\affil[5]{Osaka University, Japan}
\begin{document}

\maketitle


\begin{abstract}
Consider that there are $k\le n$ agents in a simple, connected, and undirected graph $G=(V,E)$ with $n$ nodes and $m$ edges. The goal of the dispersion problem is to move these $k$ agents to mutually distinct nodes. Agents can communicate only when they are at the same node, and no other communication means, such as whiteboards, are available. We assume that the agents operate synchronously. 
We consider two scenarios: when all agents are initially located at a single node (rooted setting) and when they are initially distributed over one or more nodes (general setting). Kshemkalyani and Sharma presented a dispersion algorithm for the general setting, which uses $O(m_k)$ time and $\log(k + \Delta)$ bits of memory per agent [OPODIS 2021], where $m_k$ is the maximum number of edges in any induced subgraph of $G$ with $k$ nodes, and $\Delta$ is the maximum degree of $G$. This algorithm is currently the fastest in the literature, as no $o(m_k)$-time algorithm has been discovered, even for the rooted setting.
In this paper, we present significantly faster algorithms for both the rooted and the general settings.
First, we present an algorithm for the rooted setting that solves the dispersion problem in $O(k\log \min(k,\Delta))=O(k\log k)$ time using $O(\log (k+\Delta))$ bits of memory per agent. Next, we propose an algorithm for the general setting that achieves dispersion in $O(k \log k \cdot \log \min(k,\Delta))=O(k \log^2 k)$ time using $O(\log (k+\Delta))$ bits.
Finally, for the rooted setting, we give a time-optimal (i.e.,~$O(k)$-time) algorithm with $O(\Delta+\log k)$ bits of space per agent. 
All algorithms presented in this paper work only in the synchronous setting, while several algorithms in the literature, including the one given by Kshemkalyani and Sharma at OPODIS 2021, work in the asynchronous setting.
\end{abstract}

\section{Introduction}
\label{sec:intro}
In this paper, we focus on the dispersion problem involving mobile entities, referred to as mobile agents, or simply, \emph{agents}. At the start of an execution, $k$ agents are arbitrarily positioned at nodes of an undirected graph $G = (V, E)$ with $n$ nodes and $m$ edges. The objective is to ensure that all agents are located at mutually distinct nodes. This problem was originally proposed by Augustine and Moses Jr.~\cite{AM18} in 2018.
A particularly intriguing aspect of this problem is the unique computation model. Unlike many other models involving mobile agents on graphs, we do not have access to node identifiers, nor can we use local memory at each node. In this setting, an agent cannot retrieve or store any information from or on a node when it visits. However, each of the $k$ agents possesses a unique identifier and can communicate with each other when they are at the same node in the graph. 
The agents must collaboratively solve the dispersion problem through this direct communication.

\begin{table}
\center
\caption{Dispersion of mobile agents on an arbitrary graph ($\tau = \min(k,\Delta)$)
The algorithm of \cite{KMS19} needs to know
an asymptotically tight upper bound on $m_k$ in advance. 
(Since $m_k \le \min(m,k\Delta,\binom{k}{2})$, knowing tight upper bounds on $m$, $k$, and $\Delta$ is sufficient, but it increases the running time to $O(\min(m,k\Delta,\binom{k}{2})\log k)$.)
The $\log k$ term can be eliminated from the space complexities marked with daggers ($\dagger$) if you choose to disregard the memory space required for each agent to store its own identifier. For example, the proposed algorithm mentioned in Theorem \ref{theorem:rootal} requires only $O(\log \Delta)$ bits per agent. 
The space complexity of the first algorithm given by \cite{KA19}, which is marked with a double dagger ($\ddagger$),
can be decreased to $O(k \log \Delta)$ if we assume that the number of possible agent-identifiers is $O(k)$.
All algorithms listed in this table are deterministic.
}
\label{table:existing}
\vspace{0.3cm}

\begin{tabular}{c c c c c c}
\hline
 & Memory per agent & Time & General/Rooted &  Async./Sync.\\
\hline
\cite{AM18}
& $O(\log (k+\Delta))$ $\dagger$ & $O(m_k)$&rooted&async.\\
\cite{KMS22}&
$O(\diam+\Delta \log k)$&$O(\diam \Delta(\diam+\Delta))$&rooted&async.\\
Theorem \ref{theorem:rootal}&
$O(\log (k+\Delta))$ $\dagger$
&$O(k \log \tau)$&rooted& sync.\\
Theorem \ref{theorem:rootopt}&
$O(\Delta + \log k)$ $\dagger$
&$O(k)$&rooted& sync.\\
\hline 
\cite{KA19}
& $O(k\log (k+\Delta))$ $\ddagger$ & $O(m_k)$&general& async.\\
\cite{KA19}& $O(\diam \log \Delta + \log k)$ $\dagger$
&$O(\Delta^\diam)$&general& async.\\
\cite{KA19}&
$O(\log (k+\Delta))$&$O(m_k\cdot k)$&general& async.\\
\cite{KMS19}&
$O(\log (k+\Delta) )$&$O(m_k \log k)$&general& sync.\\
\cite{SSKM20}&

$O(\log (k+\Delta) )$&$O(m_k \log k)$&general& sync.\\
\cite{KS21}&
$O(\log (k+\Delta) )$&$O(m_k)$&general&async.\\
Theorem \ref{theorem:general} & $O(\log (k+\Delta))$&$O(k \log^2 k)$&general& sync.\\
\hline
Lower bound & any & $\Omega(k)$ & any & any\\
\hline
\end{tabular}
\end{table}

Several algorithms have been introduced in the literature to solve the dispersion problem. 
This problem has been examined in two different contexts within the literature: the \emph{rooted setting} and the \emph{general setting}. In the rooted setting, all $k$ agents initially reside at a single node. On the other hand, the general setting imposes no restrictions on the initial placement of the $k$ agents.
For any $i \in [1,n]$, we define $m_i$ as the maximum number of edges in any $i$-node induced subgraph of $G$. The parameter $m_k$, where $k$ is the number of agents, serves as an upper bound on the number of edges connecting two nodes, each hosting at least one agent, in any configuration. Consequently, $m_k$ frequently appears in the time complexities of dispersion algorithms.
This is because (i) solving the dispersion problem essentially requires finding $k$ distinct nodes, and (ii) the simple depth-first search (DFS), employed as a submodule by many dispersion algorithms, needs to explore $m_k$ edges to find $k$ nodes.

Table \ref{table:existing} provides a summary of various dispersion algorithms found in the literature, all designed for arbitrary graphs. Here, $\Delta$ and $D$ are the maximum degree and the diameter of a graph, respectively, and $\tau = \min(k,\Delta)$.
Augustine and Moses Jr.~\cite{AM18} introduced a simple algorithm, based on depth-first search (DFS), for the rooted setting. This algorithm solves the dispersion problem in $O(m_k)$ time using $O(\log \Delta)$ bits of space per agent. 
Kshemkalyani and Ali~\cite{KA19} provides two algorithms that accomplish dispersion in the general setting:
an $O(m_k)$-time and $O(k \log (k+\Delta))$-space algorithm, and an $O(m_k \cdot k)$-time and $O(\log (k+\Delta))$-space algorithm, offering a trade-off between time and space.
The first is faster but needs more space, while the second is slower but more memory-efficient.
Kshemkalyani, Molla, and Sharma~\cite{KMS19} found a middle ground with an algorithm that runs in $O(m_k \log k)$ time and uses $O(\log (k+\Delta))$ bits of each agent's memory. This algorithm, however, requires a priori global knowledge, asymptotically tight upper bounds on $m_k$, to attain its time upper bound.
Shintaku, Sudo, Kakugawa, and Masuzawa~\cite{SSKM20} managed to eliminate this requirement for global knowledge. More recently, Kshemkalyani and Sharma~\cite{KS21} removed the $\log k$ factor from the running time. This algorithm also works in an asynchronous setting, meaning the agents do not need to share a common clock.
Any dispersion algorithm requires at least $\Omega(k)$ time, which is almost trivial, but we will provide a proof for completeness in this paper. No other lower bounds on the time complexity of dispersion have been established in the literature.
Thus, there is still a significant gap between the best known upper bound $O(m_k)$ and this lower bound of $\Omega(k)$ because $m_k = \Theta(k^2)$ holds in many graph classes.
Note that $m_k = \Theta(k^2)$ may hold even in a sparse graph when $k = O(\sqrt{n})$. 



All the algorithms mentioned above are based on DFS. However, a few algorithms \cite{KA19, KMS22} are designed based on BFS (breadth-first search) and exhibit different performance characteristics. Notably, their upper bounds on running time do not depend on the number of agents $k$, but depend on diameter $\diam$ and the maximum degree $\Delta$ of a graph.

\begin{note}[Space Complexity]
\label{note:space}
Conforming to the convention in the studies of mobile agents \cite{Das19}, this paper, including Table \ref{table:existing}, evaluates the space complexity of an algorithm as the maximum size of \emph{persistent memory} needed by an agent during its execution. Persistent memory refers to the information an agent carries when it moves from one node to another and does not include the \emph{working memory} used for local computations at nodes. This persistent memory includes the space required to store its own identifier. Since the $k$ agents are labeled with unique identifiers, every algorithm requires $O(\log k)$ bits per agent.
\end{note}

\begin{note}[Parameter $m_k$]
\label{note:m_k}
The parameter $m_k$ is introduced in this paper and has not been previously utilized in the literature. Traditionally, the running times of DFS-based algorithms are represented using the parameter $\min(m, k \Delta)$ or $\min(m, k \Delta, \binom{k}{2})$, which are always greater than or equal to $m_k$. The parameter $m_k$ may be better to represent them because there are some graph classes where $m_k = o(k^2)$ while $\min(m, k \Delta, \binom{k}{2}) = \Omega(k^2)$.
\end{note}

\subsection{Our Contribution}
In this paper, we drastically reduce the gap between the upper bound $O(m_k)$ and the lower bound $\Omega(k)$ mentioned earlier. As previously noted, $m_k = \Theta(k^2)$ may hold even in sparse graphs, making this gap significant. Let $\tau = \min(k, \Delta)$. We present two algorithms: one for the rooted setting that achieves dispersion in $O(k \log \tau) = O(k \log k)$ time using $O(\log \Delta)$ bits, and the other for the general setting that achieves dispersion in $O(k \log k \cdot \log \tau) = O(k \log^2 k)$ time using $O(\log (k + \Delta))$ bits. The upper bounds obtained here match the lower bounds in both the rooted and the general settings when ignoring poly-logarithmic factors.

To achieve this upper bound, we introduce a new technique. Like many existing algorithms, our algorithms are based on Depth-First Search (DFS). That is, we let agents run DFS on a graph and place or \emph{settle} an agent at each unvisited node they find.
Each time unsettled agents find an unvisited node $v$, one of the agents settles at $v$, and the others try to find
an unvisited neighbor of $v$. If such a neighbor exists, they move to it. If no such neighbor exists, they go back to the parent of $v$ in the DFS tree. To find an unvisited neighbor, all DFS-based dispersion algorithms in the literature make the unsettled agents visit those neighbors sequentially, \ie one by one. This process obviously requires $\Omega(\tau)$ time.
We break this barrier and find an unvisited neighbor of $v$ in $O(\log \tau)=O(\log k)$ time,
with the help of the agents already settled at neighbors of the current location $v$.

Our goal here is to find any one unvisited neighbor of $v$ if it exists, not to find all of them. Consider the case where there are only two agents $a$ and $b$ at $v$, $a$ is settled at $v$, and $b$ is still an unsettled agent. Agent $b$ visits a neighbor of $v$ and if $b$ finds a settled agent at that node, $b$ brings that agent to $v$. Consequently, there are two agents on $v$, excluding $a$, so we can use these two to visit two neighbors of $v$ in parallel. Again, if there are settled agents on both nodes, those agents will be brought to $v$. Importantly, the number of agents at $v$, excluding $a$, doubles each time this process is repeated until an unvisited neighbor is found.
Therefore, over time, we can check neighbors of $v$ in parallel with an exponentially increasing number of agents. As a result, we can finish this search or \emph{probing} process in $O(\log \tau)$ time. Thereafter, we allow the helping agents we brought to $v$ to return to their original nodes, or their \emph{homes}. Since we perform the probing process only $O(k)$ times in total throughout DFS, a simple analysis shows that dispersion can be achieved in $O(k \log \tau)$ time in the rooted setting. We call the resulting DFS the \emph{HEO (Helping Each Other)-DFS} in this paper.

In the general setting, like in existing studies, we conduct multiple DFSs in parallel, each starting from a different node. While the DFS performed in existing research requires $\Theta(m_k)$ time, we use HEO-DFS, thus each DFS completes in $O(k \log \tau)$ time. Thus, at first glance, it seems that dispersion can be achieved in $O(k \log \tau)$ time. However, this analysis does not work so simply because each DFS interferes with each other. Our proposed algorithm employs the method devised by Shintaku et al.~\cite{SSKM20} to efficiently merge multiple DFSs and run HEO-DFSs in parallel with this method. The merge process incurs an $O(\log k)$ overhead, so we solve the dispersion problem in $O(k (\log k)\cdot (\log \tau))=O(k \log^2 k)$ time.

It might seem that the overhead can be eliminated by using the DFS parallelization method
proposed by Kshemkalyani and Sharma \cite{KS21}, instead of the method of Shintaku et al.~\cite{SSKM20}.
However, this is not the case because our HEO-DFS is not compatible with the parallelization method of Kshemkalyani and Sharma. Specifically, their method entails a process such that one DFS absorbs another when multiple DFSs collide. During this process, it is necessary to gather the agents in the absorbed side to a single node, which requires $\Theta(m_k)$ time. Our speed-up idea effectively works for finding an unvisited neighbor, but it does not work for the acceleration of gathering agents dispersed on multiple nodes. Therefore, it is unlikely that our HEO-DFS can be combined with the method of Kshemkalyani and Sharma.

The two algorithms mentioned above are nearly time-optimal, \ie requiring $O(k \cdot \log^c k)$ time for some constant $c$. We also demonstrate that in the rooted setting, a time-optimal algorithm based on the HEO-DFS can be achieved if significantly more space is available, specifically $O(\Delta)$ bits per agent.

To the best of our knowledge, HEO-DFS is a novel approach, and no similar techniques have been used in the literature concerning mobile agents and mobile robots. While we demonstrate that HEO-DFS significantly reduces the running time of dispersion algorithms, this technique may also prove useful for addressing other fundamental problems such as exploration and gathering.

A drawback of our HEO-DFS is that it fundamentally requires a synchronous model, even in the rooted setting; \ie it does not function in an asynchronous model. In HEO-DFS, we attempt to find an unvisited neighbor of the current location with the help of agents settled on other neighbors. These agents must return to their homes once the probing process is completed. In an asynchronous model, unsettled agents (and/or helping agents) may visit those homes before their owners return, disrupting the consistent behavior of HEO-DFS. Therefore, the algorithm provided by Kshemkalyani and Sharma \cite{KS21} remains the fastest for the asynchronous model. It is still an open question whether there exists a $o(k^2)$-time algorithm that accommodates asynchronicity.

\begin{note}[Termination]
In this paper, we do not explicitly mention how the agents terminate the execution of a given algorithm.
In many cases,
termination is straightforward without any additional assumptions in the rooted setting, while in the general setting, additional assumptions are required. Specifically, in the general setting, all algorithms listed in Table \ref{table:existing}, except for the $O(\Delta^d)$-time algorithm presented in \cite{KMS20} \footnote{
However, in this algorithm, the agents do not terminate simultaneously, and they require the ability to detect whether or not there is a terminated agent at the current location.
}, require both a synchronous setting and global knowledge such as (asymptotically tight upper bounds on) $m_k$ and $k$. With these assumptions, the agents can easily terminate simultaneously after a sufficiently large number of steps, \eg $\Theta(k \log^2 k)$ steps in our algorithm for the general setting.
Thus, when termination is required, our general setting algorithm no longer exhibits disadvantages compared to existing algorithms: all existing algorithms, except for the $O(\Delta^D)$-time one~\cite{KMS20}, also require a synchronous setting (and some global knowledge).

 For completeness, we present how the agents terminate in our algorithms for the rooted setting, which is almost trivial, 
 in Section \ref{sec:termination} in the appendix.
\end{note}

\subsection{Further Related Work}
The dispersion problem has been studied not only for arbitrary undirected graphs but also for graphs with restricted topologies such as trees \cite{AM18}, grids \cite{KMS19,KMS20b}, and dynamic rings \cite{AAM+18}. Additionally, several studies have explored randomized algorithms to minimize the space complexity of dispersion \cite{MM19,DBS23}, and others have focused on fault-tolerant dispersion \cite{MMM21,CKM+23}. Kshemkalyani et al.~\cite{KMS22} introduced the \emph{global communication model}, where all agents can communicate with each other regardless of their locations. In contrast, the standard model, where
only the agents co-located at the same node can communicate with each other, is sometimes referred to as the \emph{local communication model}. All algorithms listed in Table \ref{table:existing} assume the local communication model and are deterministic. 

Exploration by a single mobile agent is closely related to the dispersion problem. The exploration problem requires an agent to visit all nodes of a graph. Many studies have addressed the exploration problem, and numerous efficient algorithms, both in terms of time and space, have been presented in the literature \cite{PDD+96,PP99,SBN15,SOK24}. In contrast to exploration, the dispersion problem only requires finding $k$ nodes, and we can use $k$ agents to achieve this. Our HEO-DFS take advantage of these differences to solve the dispersion problem efficiently.

\subsection{Organization of this paper}

The remainder of this paper is structured as follows:
\begin{itemize}
    \item Section \ref{sec:model} introduces the basic definitions, the model of computation, and the formal problem specification.
    \item Sections \ref{sec:rooted} and \ref{sec:general} present near time-optimal dispersion algorithms in the rooted and general settings, respectively, utilizing $O(\log (k+\Delta))$ bits per agent.
    \item Section \ref{sec:rootopt} describes a time-optimal algorithm for the rooted setting, achieved by increasing memory usage to $O(\Delta + \log k)$ bits per agent, trading off increased memory for optimal time performance.
    \item Section \ref{sec:discussion} summarizes the contributions of this work and discusses the conditions under which the HEO-DFS technique can accommodate the lack of synchronicity.
    \item The Appendix details the termination procedures for the two proposed algorithms designed for the rooted setting.
\end{itemize}

\section{Preliminaries}
\label{sec:model}
Let $G=(V,E)$ be any simple, undirected, and connected graph.
Let $n=|V|$ and $m=|E|$. 
We denote the set of \emph{neighbors} of node $v \in V$ 
by $N(v) = \{u \in V \mid \{u,v\} \in E\}$
and the degree of a node $v$ by $\delta_v = |N(v)|$.
Let $\Delta = \max_{v \in V} \delta_v$,
\ie $\Delta$ is the maximum degree of $G$.
The nodes are anonymous, \ie they do not have unique identifiers. However, the edges incident to a node $v$ 
are locally labeled at $v$
so that an agent located at $v$ can distinguish those edges.
Specifically, those edges have distinct labels
$0,1,\dots,\delta_v-1$ at node $v$.
We call these local labels \emph{port numbers}. 
We denote the port number assigned at $v$ for edge $\{v,u\}$
by $p_v(u)$.
Each edge $\{v,u\}$ has two endpoints, thus has labels $p_{u}(v)$ and $p_{v}(u)$.
Note that these labels are independent,
\ie $p_{u}(v) \neq p_{v}(u)$ may hold.
For any $v \in V$,
we define $N(v,i)$ as the node $u \in N(v)$ such that $p_v(u)=i$.
For simplicity, we define $N(v,\bot) = v$ for all $v \in V$.

We consider that $k$ agents exist in graph $G$, where $k \le n$.
The set of all agents is denoted by
$\aset$.
Each agent is always located at some node in $G$,
\ie the move of an agent is \emph{atomic}
and an agent is never located at an edge at any time step
(or just \emph{step}).
The agents have unique identifiers,
\ie each agent $a$ has a positive integer as its identifier $a.\id$
such that $a.\id \neq b.\id$
for any $b \in \aset \setminus \{a\}$.
The agents know a common upper bound $\idmax \ge \max_{a \in \aset}a.\id$
such that $\idmax = \mathrm{poly}(k)$,
thus the agents can store the identifier of any agent
on $O(\log k)$ space.
Each agent has a read-only variable $a.\pin \in \{0,1,\dots,\Delta-1\}\cup\{\bot\}$.
At time step $0$, $a.\pin = \bot$ holds.
For any $t \ge 1$,
if $a$ moves from $u$ to $v$ at step $t-1$,
$a.\pin$ is set to $p_v(u)$ (or the port of $v$ incoming from $u$)
at the beginning of step $t$.
If $a$ does not move at step $t-1$,
$a.\pin$ is set to $\bot$.
We call the value of $a.\pin$ the incoming port of $a$.
The values of all variables in agent $a$,
excluding its identifier $a.\id$ and special variables $a.\pin, a.\pout$,
constitute the state of $a$.
(We will see what is $a.\pout$ later.)

The agents are synchronous
and are given a common algorithm $\calA$.
An algorithm $\calA$ must specify the initial state $\sinit$ of agents.
All agents are in state $\sinit$ at time step $0$.
Let $\aset(v,t) \subseteq \aset$ denote the set of agents located at node $v$ at time step $t \ge 0$.
At each time step $t \ge 0$,
each agent $a \in \aset(v,t)$ is given
the following information as the inputs:
(i) the degree of $v$, 
(ii) its identifier $a.\id$, 
and 
(iii) a sequence of triples $((b.\id, s_b, b.\pin))_{b \in \aset(v,t)}$,
where $s_b$ is the current state of $b$.
Note that each $a \in \aset(v,t)$ can obtain its current state $s_a$ and $a.\pin$
from the sequence of triples since $a$ is given its ID as the second information.
Then, it updates the variables in its memory space in step $t$,
including a variable
$a.\pout \in \{\bot,0,1,\dots,\delta_v-1\}$,
according to algorithm $\calA$.
Finally, each agent $a \in \aset(v,t)$
moves to node $N(v,a.\pout)$.
Since we defined $N(v,\bot)=v$ above,
agent $a$ with $a.\pout = \bot$ stays in $v$ in step $t$.

A node does not have any local memory
accessible by the agents. 
Thus, the agents can coordinate only by communicating with the co-located agents. 
No agents are given any global knowledge 
such as $m$, $\Delta$, $k$, and $m_k$ in advance.

A function $C:\aset \to \calM_{\calA} \times V \times \{\bot,0,1,\dots,\Delta-1\}$
is called a global state of the network or a \emph{configuration}
if $C(a) = (s,v,q)$ yields $q = \bot$ or $q < \delta_v$ for any $a \in A$,
where $\calM_{\calA}$ is the (possibly infinite) set of all agent-states.
A configuration specifies
the state, location, and incoming port
of each $a \in A$.
In this paper, we consider only deterministic algorithms.
Thus, if the network is in a configuration $C$ at a time step $t$, 
a configuration $C'$ in the next step $t+1$ is uniquely determined.
We denote this configuration $C'$ by $\nextc_{\calA}(C)$.
The execution $\Xi_\calA(C_0)$ of algorithm $\calA$
starting from a configuration $C_0$ is defined
as an infinite sequence $C_0,C_1,\dots$ of configurations
such that $C_{t+1} = \nextc_{\calA}(C_t)$ for all $t=0,1,\dots$.
We say that a configuration $C_0$ is \emph{initial} if
the states of all agents are $\sinit$
and the incoming ports of all agents are $\bot$ in $C_0$.
Moreover, in the rooted setting,
we restrict the initial configurations to those where
all agents are located at a single node.

\begin{definition}[Dispersion Problem]
A configuration $C$ of an algorithm $\calA$
is called \emph{legitimate} if
(i) all agents in $\aset$ are located in different nodes in $C$,
and (ii) no agent changes its location
in execution $\Xi_\calA(C)$. 
We say that $\calA$ solves the dispersion problem
if execution $\Xi_\calA(C_0)$ reaches a legitimate configuration
for any initial configuration $C_0$.
\end{definition}

We evaluate the \emph{time complexity} or \emph{running time} of algorithm $\calA$
as the maximum number of steps until $\Xi_\calA(C_0)$
reaches a legitimate configuration, where the maximum is taken over all initial configurations $C_0$.
Let $\calM'_{\calA} \subseteq \calM_{\calA}$ be the set of all agent-states that can appear in any possible execution of $\calA$ starting from any initial configuration. 
We evaluate the \emph{space complexity} or \emph{memory space} of algorithm $\calA$
as $\log_2 |\calM'_{\calA}| + \log_2 \idmax$, \ie
the maximum number of bits required to represent an agent-state that may appear in those executions,
plus the number of bits required for each agent to store its own identifier.
This implies that we exclude the size of the working memory used for deciding the destination and updating states, as well as the space for storing input information, except for the agent's own identifier.


Throughout this paper, we denote by $[i,j]$ the set of integers
$\{i,i+1,...,j\}$. We have $[i,j]=\emptyset$ when $j<i$.
When the base of a logarithm is not specified, it is assumed to be 2.
We frequently use $\tau = \min(k,\Delta)$.
We define $\lctn(a,t)$ as the node where agent $a$ resides at time step $t$.
We also omit time step $t$ from any function in the form $f(*,t)$
and just write $f(*)$ if $t$ is clear from the context.
For example, we just write $\aset(v)$ and $\nu(a)$ instead of $\aset(v,t)$ and $\nu(a,t)$.

We have the following remark considering the fact that $G$ can be a simple path.
\begin{remark}
For any dispersion algorithm $\calA$, 
there exists a graph $G$ such that
an execution of $\calA$ requires $\Omega(k)$ time steps to achieve dispersion
on both the rooted and the general settings. 
\end{remark}

In the two algorithms we present in this paper, $\rootal$ and $\general$,
each agent maintains a variable $a.\settled \in \{\bot,\top\}$.
We say that an agent $a$ is a \emph{settler} when $a.\settled = \top$, and an \emph{explorer} otherwise.
All agents are explorers initially. Once an explorer becomes a settler, it never becomes an explorer again.
Let $t$ be the time at which an agent $a$ becomes a settler.
Thereafter, we call the location of $a$ at that time, \ie $\lctn(a,t)$, the \emph{home} of $a$.
Formally, $a$'s home at $t' \ge 0$, denoted by $\xi(a,t')$, is defined as
$\xi(a,t')=\bot$ if $t'<t$ and $\xi(a,t')=\lctn(a,t)$ otherwise.
It is worth mentioning that a settler may temporarily leave its home. Hence $\xi(a,t')=\lctn(a,t')$ may not always hold even after $a$ becomes a settler, \ie even if $t'\ge t$.
However, by definition, no agent changes its home.
We say that an agent $a$ \emph{settles} when it becomes a settler.

When a node $u$ is a home of an agent at time step $t$,
we call this agent the settler of $u$ and denote it as $\psi(u,t)$.
Formally, if there exists an agent $a$ such that $\xi(a,t)=u$, then $\psi(u,t)=a$; otherwise, $\psi(u,t)=\bot$. 
This function $\psi$ is well defined for the two presentented algorithms because
they ensure that no two agents share a common home.
We say that a node $u$ is \emph{unsettled} at time step $t$
if $\psi(u,t)=\bot$, and \emph{settled} otherwise.

\section{Rooted Dispersion}
\label{sec:rooted}
In this section, we present an algorithm, $\rootal$, that solves the dispersion problem in the rooted setting. That is, it operates under the assumption that all agents are initially located at a single node $s\in V$. This algorithm straightforwardly implements the strategy of the HEO-DFS, which we presented in Section \ref{sec:intro}. The time and space complexities of this algorithm are $O(k \log \tau)$ steps and $O(\log (k+\Delta))$ bits, respectively.

In an execution of Algorithm $\rootal$, the agent with the largest ID, denoted as $\amax$, serves as the leader.
Note that every agent can easily determine whether it is $\amax$ or not at time step $0$ by comparing the IDs of all agents.
Then, $\amax$ conducts a depth-first search (DFS), while the other agents move with the leader and one of them settles at an unsettled node when they visit it.
If $\amax$ encounters an unsettled node without any accompanying agents, $\amax$ settles itself on that node, achieving dispersion. During a DFS, $\amax$ must determine (i) whether there is an unsettled neighbor of the current location, and (ii) if so, which neighbor is unsettled. To make this decision, all DFS-based algorithms in the literature have $\amax$ visit neighbors one by one until it finds an unsettled node, which clearly requires $\Omega(\tau)$ steps. $\rootal$, in contrast, makes this decision in $O(\log \tau)$ steps with the help of the agents that have already settled on the neighbors of the current location. 


\begin{figure}[t]
\centering
\includegraphics[width=0.85\linewidth,clip]{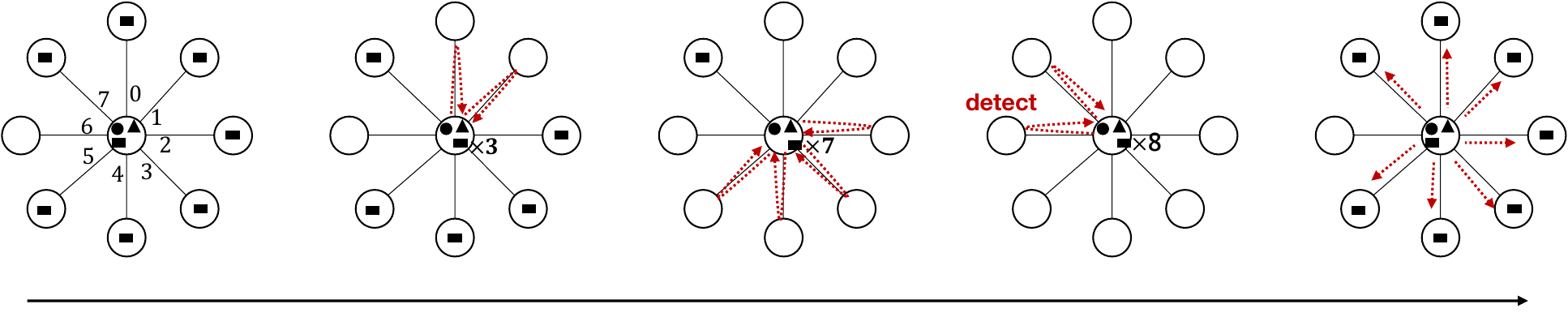}
\caption{
The behavior of the agents when the leader $\amax$ invokes $\Probe()$ at the center node $w$
in $\rootal$.
A black circle, triangle, and rectangle
represent a leader ($\amax$), a non-leader explorer, 
and a settler, respectively.
The integers in the leftmost figure represents port numbers.
In every two time steps,
the number of agents on $w$ excluding $\psi(w)$ doubles 
(\ie $2 \to 4 \to 8$)
until some agent detects an unsettled neighbor of $w$. 
After that, $\amax$ lets the helping settlers go back to their homes.
}
\label{fig:probing}
\end{figure}

\begin{algorithm}[t]
\caption{$\rootal$}
\label{al:rootal}
$b.\settled \gets \top$, where $b$ is the agent with the smallest ID in $\aset(s)$\\
\tcp*{$b$ settles at the starting node $s$}
$b.\parent \gets \bot$\;
\While{$\amax.\settled = \bot$}{
$\Probe()$\;
Let $w = \lctn(\amax)$\;
\uIf{$\psi(w).\nxt \neq \bot$}{
Let $u=N(w,\psi(w).\nxt)$ \tcp*{$u$ is an unsetteled node here}
All explorers in $\aset(w)$ go to $u$\;
$b'.\settled \gets \top$, where $b'$ is the agent with the smallest ID in $\aset(u)$\;
$b'.\parent \gets \amax.\pin$
}
\Else{
All explorers in $\aset(w)$ go back to node $N(w,\psi(w).\parent)$.\;
}
}
\modulespace

\Fn{$\Probe()$}{
Let $w=\lctn(\amax)$.\;
$(\psi(w).\nxt, \psi(w).\checked) \gets (\bot,-1)$\;
\While{$\psi(w).\checked \neq \delta_w-1$}{
Let $a_1,a_2,\dots,a_x$ be the agents in $A(w) \setminus \{\psi(w)\}$,
and let $\Delta'=\min(x,\delta_w-1-\psi(w).\checked)$.
For each $i=1,2,\dots,\Delta'$, assign $a_i$ to the neighboring node $u_i=N(w,i+\psi(w).\checked)$, and let $a_i$ make a round trip between $w$ and $u_i$. In other words, make $a_i$ move in the order $w \to u_i \to w$. If $a_i$ finds a settler at $u_i$, it will bring the settler $\psi(u_i)$ back to $w$.\;
\uIf{there exists $a_i$ that did not bring $\psi(u_i)$ back to $w$}{
$\psi(w).\nxt \gets i+\psi(w).\checked$ \tcp*{$u_i$ must be unsettled}
Break the while loop.
}\Else{
 $\psi(w).\checked \gets \psi(w).\checked + \Delta'$\;
}
}
Let all settlers except for $\psi(w)$ go back to their homes. 
}
\end{algorithm}

The pseudocode for Algorithm $\rootal$ is shown in Algorithm \ref{al:rootal}. This pseudocode consists of two parts: the main function (lines 1--12) and the function $\Probe()$ (lines 13--23).
As mentioned in the previous section, every agent $a$
maintains a variable $a.\settled \in \{\bot, \top\}$,
which decides whether $a$ is an explorer or a settler. 
In addition,
the settler $\psi(w)$ of a node $w$ maintains two variables, $\psi(w).\parent, \psi(w).\nxt \in [0, \delta_w - 1] \cup {\bot}$ for the main function.
As we will see later, the following are guaranteed each time $\amax$ invokes $\Probe()$ at node $w$:
\begin{itemize}
\item If there exists an unsettled node in $N(w)$,
the corresponding port number will be stored in $\psi(w).\nxt$.
More precisely, an integer $i$ such that $N(w, i) = u$ and $\psi(u) = \bot$
is assigned to $\psi(w).\nxt$.
\item 
If all neighbors are settled, 
$\psi(w).\nxt$ will be set to $\bot$.
\item $\Probe()$ will return in $O(\log \tau)$ time.
\end{itemize}

The main function performs a depth-first search using function $\Probe()$ to achieve dispersion. At the beginning of the execution, all agents are located at the same node $s$. Initially, the agent with the smallest ID settles at node $s$, and $\psi(s).\parent$ is set to $\bot$ (lines 1--2). 
Then, as long as there are unsettled nodes in $N(\lctn(\amax))$,
all explorers move to one of those nodes together (lines 7--8). We call this kind of movements \emph{forward moves}.
After each forward move from a node $w$ to $u$,
the agent with the smallest ID among $A(u)$ settles on $u$,
and $\psi(u).\parent$ is set to $i$ with $N(u, i)=w$ (lines 9--10).
 For any node $u \in V$, if $\psi(u).\parent \neq \bot$,
 we say that $w=N(u,\psi(u).\parent)$ is a \emph{parent} of $u$. 
 By line 9--10,
 each of the nodes except for the starting node $s$ will have its parent as soon as it becomes settled.
 When the current location has no unsettled neighbors, all explorers move to the \emph{parent} of the current location (lines 11--12). 
 We call this kind of movements \emph{backward moves} or \emph{retreats}. 
 Finally, $\amax$ terminates when it settles (line 3).

Since the number of agents is $k$, the DFS-traversal stops after $\amax$ makes a forward move $k-1$ times.
The agent $\amax$ makes a backward move at most once from any node. 
Therefore, excluding the execution time of $\Probe()$, the execution of the main function completes in $O(k)$ time. Furthermore, the function $\Probe()$ is invoked at most $2(k-1)$ times, once after each forward move and once after each backward move. Since a single invocation of $\Probe()$ requires $O(\log \tau)$ time,
the overall execution time of $\rootal$ can be bounded by $O(k \log \tau)$ time.

Let us describe the behavior of the function $\Probe()$, assuming that it is invoked on node $w$ at time step $t$.
Figure \ref{fig:probing} may help the readers to understand the behavior. 
In the execution of $\Probe()$, the leader $\amax$ employs the explorers present on $w$ and (a portion of) the settlers at $N(w)$ to search for an unsettled node in $N(w)$.
We implement this process with a variable $\psi(w).\checked \in [-1,\delta_w-1]$ for the settler $\psi(w)$.
Specifically, explorers at node $w$ verify whether the neighbors of $w$ are unsettled or not in the order of port numbers and store the most recently checked port number in $\psi(w).\checked$. Consequently, $\psi(w).\checked=\ell$ implies that the neighbors $N(w,0),N(w,1),\dots,N(w,\ell)$ are settled.
Let $x=|\aset(w,t)\setminus\{\psi(w)\}|$, i.e., there are $x$ agents at $w$ when $\Probe()$ is invoked, excluding $\psi(w)$. In the first iteration of the while loop (lines 16--22), the $\min(x,\delta_w)$ agents concurrently visit $\min(x,\delta_w)$ neighbors and then return to $w$ (lines 17--18). This entire process takes exactly two time steps. These agents bring back all the settlers, at most one for each neighbor, they find. If there is an agent that does not find a settler, then the node visited by that agent must be unsettled. In such a case, the port used by one of these agents is stored in $\psi(w).\nxt$, and the while loop terminates (lines 20--21).\footnote{
For simplicity, we reset $\psi(w).\checked$ to $-1$ each time we invoke $\Probe()$ at $w$,
so we do not use the information about which ports were already checked in the past invocation of $\Probe()$.
As a result, the value of $\psi(w).\nxt$ computed by $\Probe$
does not have to be the minimum port leading to an unsettled neighbor of $w$.
}
If all $x$ agents bring back one agent each, then there are $2x$ agents on $w$, excluding $\psi(w)$.
In the second iteration of the while loop, these $2x$ agents visit the next $2x$ neighbors and search for unsettled neighbors in a similar way. As long as no unsettled neighbors are discovered, the number of agents on $w$, excluding $\psi(w)$, doubles with each iteration of the while loop. Since there are at most $\tau=\min(k,\Delta)$ settled nodes in $N(w)$, after running the while loop at most $O(\log (\tau/x))=O(\log \tau)$ times, either an unsettled node will be found, or the search will be concluded without finding any unsettled nodes. In the latter case, since $\psi(w).\nxt$ is initialized to $\bot$ when $\Probe()$ is called (line 15), $\psi(w).\nxt=\bot$ will also be valid at the end of the while loop, allowing $\amax$ to verify that all neighbors of $w$ are settled. After the while loop ends, the settlers brought back to $w$ return to their homes (line 23).
This process of ``returning to their homes'' requires the agents to remember the port number leading to their home from $w$. However, we exclude this process from the pseudocode because it can be implemented in a straightforward manner, and it requires only $O(\log \Delta)$ bits of each agent's memory.
In conclusion, we have the following lemma.

\begin{lemma}
\label{lemma:probe}
Each time $\Probe()$ is invoked on node $w\in V$, $\Probe()$ finishes in $O(\log \tau)$ time. At the end of $\Probe()$, it is guaranteed that:
(i) if there exists an unsettled node in $N(w)$, then $N(w,\psi(w).\nxt)$ is unsettled, and 
(ii) if there are no unsettled nodes in $N(w)$, then $\psi(w).\nxt=\bot$ holds true.
\end{lemma}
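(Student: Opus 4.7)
The plan is to prove the two correctness claims by maintaining a loop invariant on $\psi(w).\checked$, and then to establish the $O(\log \tau)$ time bound via a doubling argument on the number of explorers at $w$. The critical invariant to carry through iterations of the while loop (lines 16--22) is that at the start of each iteration, the neighbors $N(w,0), N(w,1), \dots, N(w, \psi(w).\checked)$ are all settled and $\psi(w).\nxt$ still equals $\bot$. The initialization at line 15 trivially establishes this invariant.

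Inductively, consider an iteration that does not exit via the break. Each of the $\Delta'$ explorers $a_i$ is assigned a distinct, previously-unchecked port in $[\psi(w).\checked+1, \psi(w).\checked+\Delta']$ and performs the two-step round trip $w \to u_i \to w$. If none of them fails to bring a settler back, then every one of these newly probed ports indeed leads to a settled neighbor, so advancing $\psi(w).\checked$ by $\Delta'$ preserves the invariant. Part (ii) follows: when no unsettled neighbor of $w$ exists, the break is never taken, the loop terminates once $\psi(w).\checked = \delta_w - 1$, and $\psi(w).\nxt$ retains its initial value $\bot$. For part (i), I would argue that because $\rootal$ has a single leader $\amax$ that invokes $\Probe()$ sequentially, at most one instance of $\Probe()$ is active at any time, and a settler leaves its home only inside such a call. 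Hence, if an explorer $a_i$ does not bring back the settler of $u_i$, the node $u_i$ genuinely has no settler at that instant, which means $u_i$ is unsettled. Assigning $\psi(w).\nxt$ to this port at line 21 therefore correctly identifies an unsettled neighbor. Moreover, whenever an unsettled neighbor of $w$ exists, it is eventually probed in some iteration (since ports are examined in strictly increasing order), so the while loop must break before exhausting all $\delta_w$ neighbors, guaranteeing $\psi(w).\nxt \neq \bot$.

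The main obstacle is the time bound. Let $x_i = |A(w) \setminus \{\psi(w)\}|$ at the start of iteration $i$, so $x_1 \ge 1$ because $\amax$ is present. Whenever iteration $i$ does not terminate via the break, every explorer brings back a settler, giving $x_{i+1} = x_i + \Delta'$; when $\Delta' = x_i$ (enough unchecked neighbors remain) this yields $x_{i+1} = 2x_i$, and otherwise the remaining unchecked neighbors are exhausted in this very iteration. Consequently, after at most $\lceil \log_2 \tau \rceil + O(1)$ non-terminating iterations one has $x_i \ge \tau$, at which point the next iteration must either find an unsettled neighbor or probe every remaining neighbor of $w$: since at most $k-1$ nodes of $G$ are settled and $w$ has at most $\delta_w \le \Delta$ neighbors, having $x_i \ge \tau = \min(k,\Delta)$ suffices, in combination with the invariant, to conclude the search. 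Each iteration costs exactly two time steps for the round trip, and the final dispatch of helpers to their homes (line 23) costs one additional step, yielding a total running time of $O(\log \tau)$.
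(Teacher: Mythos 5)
Your proposal is correct and follows essentially the same argument the paper gives in the prose preceding the lemma: the explicit loop invariant on $\psi(w).\checked$ and the initialization $\psi(w).\nxt=\bot$ yield claims (i) and (ii), and the doubling of $|A(w)\setminus\{\psi(w)\}|$ across non-breaking iterations gives the $O(\log\tau)$ bound. One small wording caveat: when $\tau=k$ the inequality $x_i\ge\tau$ can never literally hold (there are only $k$ agents and $\psi(w)$ is one of them), so that branch of your case analysis should be read as a contradiction forcing earlier termination rather than as a reachable state; this does not affect the validity of the bound.
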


\begin{lemma}
\label{lemma:space_rootal}
Each agent requires $O(\log (k+\Delta))$ bits of memory to execute $\rootal$.
\end{lemma}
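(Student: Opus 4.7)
The plan is to enumerate every piece of persistent information each agent carries while executing $\rootal$ and bound the size of each field by $O(\log(k+\Delta))$ bits; then, since the number of distinct fields is a small constant, summing them gives the claimed bound. Recall from Note~\ref{note:space} that persistent memory excludes the working memory used during a single step and also excludes the input information $a.\pin$, $a.\pout$, and the triples received at the current node, but it does include the space for storing $a.\id$, which contributes $O(\log \idmax) = O(\log k)$ bits by the assumption $\idmax = \mathrm{poly}(k)$.

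Next I would identify the algorithm-specific state variables appearing in Algorithm~\ref{al:rootal}: the flag $a.\settled \in \{\bot,\top\}$ taking $O(1)$ bits, and the three port-valued variables $a.\parent$, $a.\nxt$, $a.\checked$, each of which ranges in $[-1,\Delta-1] \cup \{\bot\}$ and therefore fits in $O(\log \Delta)$ bits. I also need to account for the bookkeeping mentioned in the paragraph following Lemma~\ref{lemma:probe}: whenever a settler is brought from a neighbor back to $w$ during $\Probe()$, it must remember the port number on $w$ leading back to its home so that it can return at the end of the probe. This adds one more variable of size $O(\log \Delta)$ bits. Summing the contributions yields $O(\log k) + O(\log \Delta) = O(\log(k+\Delta))$ bits per agent, as claimed.

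The only subtlety is to justify that no hidden unbounded information is carried between steps. In particular, I would argue that the leader $\amax$ does not need to maintain any additional state on top of the fields above: at each step, $\amax$ can read its own state and $a.\pin$, together with the states of the co-located settler $\psi(w)$ and the other agents at $w$, and from these inputs locally decide its next action. Likewise, the explorers accompanying $\amax$ only need $a.\settled$ until they settle. Thus every persistent field has been accounted for, and the constant-sized collection of $O(\log(k+\Delta))$-bit fields yields the claimed total of $O(\log(k+\Delta))$ bits per agent. The main obstacle is therefore the careful enumeration of all implicit persistent variables (most notably the ``port to home'' used by temporarily-displaced settlers), rather than any quantitative estimate.
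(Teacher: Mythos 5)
Your proposal is correct and follows essentially the same route as the paper's proof: enumerate the $O(\log \Delta)$-bit variables $\nxt$, $\checked$, $\parent$, account for the extra port number a helping settler must remember to return home after line 17, note that all remaining fields are constant-size, and add the $O(\log k)$ bits for the identifier. You correctly identified the one subtle point (the implicit ``port to home'' variable), which is exactly the item the paper also singles out.
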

\begin{proof}
In this algorithm, an agent handles several $O(\log \Delta)$-bit variable, $\nxt$, $\checked$, $\parent$, as well as the port number that the settler $\psi(u)$ needs to remember in order to return to node $u$ from node $w$ at line 23 after coming at line 17. Every other variable can be stored in a constant space. Therefore, the space complexity is $O(\log (k+\Delta))$ bits, adding the memory space to store the agent's identifier.
\end{proof}

\begin{theorem}
\label{theorem:rootal}
In the rooted setting,
algorithm $\rootal$ 
solves the dispersion problem within $O(k\log \tau)$ time using $O(\log (k+\Delta))$ bits of space per agent.
\end{theorem}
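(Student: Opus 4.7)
The plan is to combine the per-iteration bound from Lemma~\ref{lemma:probe} with a standard DFS counting argument, while the space bound comes directly from Lemma~\ref{lemma:space_rootal}. The remaining tasks are to verify correctness (that dispersion is actually achieved) and to derive the $O(k\log\tau)$ time bound.

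For correctness, I would proceed by induction on the iterations of the main while loop. The invariant I would maintain is that at the start of each iteration: (i) the settled nodes induce a connected subgraph containing $s$; (ii) each settled node $v \neq s$ has $\psi(v).\parent$ pointing to the port of $v$ leading to its parent along the DFS tree rooted at $s$; (iii) $\amax$ together with all remaining explorers is co-located at some settled node $w$; and (iv) the path in the DFS tree from $s$ to $w$ is precisely the stack of ``open'' nodes of the DFS. The base case follows from lines 1--2, which settle the smallest-ID agent at $s$ with $\parent=\bot$ and leave $\amax$ and all other explorers at $s$. The inductive step uses Lemma~\ref{lemma:probe}: after $\Probe()$ at $w$, either $\psi(w).\nxt$ points to an unsettled neighbor $u$, in which case lines 7--10 perform a forward move that settles one fresh explorer at $u$ and correctly sets its $\parent$ pointer back to $w$; or $\psi(w).\nxt=\bot$, in which case line 12 retreats to the DFS parent of $w$. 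Since $G$ is connected and $k\le n$, the DFS necessarily settles $k$ distinct nodes, and the loop exits precisely when $\amax$ itself settles at line 9 (which happens when $\amax$ is the only remaining explorer after a forward move), at which point all $k$ agents occupy distinct home nodes and no agent moves thereafter.

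For the time bound, the counting is routine once the DFS invariant is in place. Each forward move settles exactly one new agent, so at most $k-1$ forward moves occur in total (one agent was already settled at $s$ in line 1, and $\amax$ settles on the last forward move). In a DFS on a tree whose size never exceeds $k$, the total number of backward moves is at most the number of forward moves, so there are at most $2(k-1)$ moves in total. Hence the main loop executes $O(k)$ iterations, each performing one call to $\Probe()$---contributing $O(\log\tau)$ time by Lemma~\ref{lemma:probe}---plus $O(1)$ additional work for lines 5--12. Therefore the total running time is $O(k\log\tau)$, which combined with Lemma~\ref{lemma:space_rootal} yields the theorem.

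The hard part will not be the high-level counting but rather the sanity check that the helping settlers brought to $w$ during $\Probe()$ are fully restored to their homes (line 23) before the main loop advances, and that while they are temporarily away no explorer ever mistakes a settled home for an unsettled node. In the synchronous model this is essentially immediate, because the entire $\Probe()$ call, including the return-home step, completes within a single invocation before control returns to the main loop; nevertheless, stating the invariant carefully enough to rule out such cross-interference---and in particular to guarantee that the $\parent$ fields of the helping settlers are not inadvertently overwritten while they assist at $w$---is the only non-mechanical part of the argument. Everything else reduces to invoking Lemmas~\ref{lemma:probe} and~\ref{lemma:space_rootal} together with the standard DFS move count.
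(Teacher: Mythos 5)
Your proposal is correct and follows essentially the same route as the paper's proof: the theorem is obtained by combining Lemma~\ref{lemma:probe} (each $\Probe()$ call costs $O(\log\tau)$ and correctly reports an unsettled neighbor) with the standard DFS count of exactly $k-1$ forward moves and at most $k-1$ backward moves, hence at most $2(k-1)$ calls to $\Probe()$, plus Lemma~\ref{lemma:space_rootal} for the space bound. The paper states this argument more tersely, leaving the correctness invariant and the synchronous ``helpers return home before the loop advances'' point implicit, whereas you spell them out; the substance is the same.
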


\begin{proof}
As long as there is an unsettled neighbor of the current location,
$\amax$ makes a forward move to one of those nodes. If there is no such neighbor, 
$\amax$ makes a backward move to the parent node of the current location. 
Since the graph is connected, this DFS-traversal clearly visits $k$ nodes with exactly $k-1$ forward moves
and at most $k-1$ backward moves. 
Thus, the number of calls to $\Probe()$ is at most $2(k-1)$ times. By Lemma \ref{lemma:probe}, the execution of $\rootal$ achieves dispersion within $O(k \log \tau)$ time.
\end{proof}

\section{General Dispersion}
\label{sec:general}
 \subsection{Overview}
 \label{sec:overview}

In this section, we present an algorithm $\general$ that solves the dispersion problem in $O(k \log \tau \cdot \log k) = O(k\log^2 k)$ time, using $O(\log (k+\Delta))$ bits of each agent's memory, in the general setting.
 Unlike the rooted setting, the agents are deployed arbitrarily. 
 In $\general$, we view the agents located at the same starting node as a single \emph{group} and achieve rapid dispersion by having each group perform a HEO-DFS in parallel, sometimes merging groups. 
 We show that by employing the group merge method given by Shintaku et al.~\cite{SSKM20}, say \emph{Zombie Method}, we can parallelize HEO-DFS by accepting an additive factor of $\log k$ to the space complexity and a multiplicative factor of $\log k$ to the time complexity. We have made substantial modifications to the Zombie Method to avoid conflicts between the function $\Probe()$ of HEO-DFS and the behavior of the Zombie Method.


As defined in Section \ref{sec:model}, 
agents $a$ with $a.\settled=\top$ are called settlers,
and the other agents are called explorers. 
In addition, in $\general$, 
we classify explorers to two classes, \emph{leaders} and \emph{zombies},
depending on a variable $\leader \in [0,\idmax]$.
We call an explorer $a$ a leader if $a.\leader = a.\id$, otherwise a zombie. 
Each agent $a$ initially has $a.\leader = a.\id$, so all agents are leaders at the start of an execution of $\general$. 
As we will see later, a leader may become a zombie and a zombie will eventually become a settler, whereas
a zombie never becomes a leader again, and a settler never becomes a leader or zombie again.
Among the agents in $A(v,t)$, the set of leaders (resp., zombies, settlers) staying at $v$ in time step $t$ is denoted by $A_L(v,t)$ (resp., $A_Z(v,t), A_S(v,t)$). 
By definition, $A(v,t)=A_L(v,t) \cup A_Z(v,t) \cup A_S(v,t)$.

We introduce a variable $\level \in \mathbb{N}$ to bound the execution time of $\general$.
We call the value of $a.\level$ the \emph{level} of agent $a$. 
The level of every agent is $1$ initially. 
The pair $(a.\leader, a.\level)$ serves as the group identifier:
when agent $a$ is a leader or settler, 
we say that $a$ belongs to a group $(a.\leader, a.\level)$.
By definition, for any $(\ell,i) \in \mathbb{N}^2$,
a group $(\ell,i)$ has at most one leader. 
A zombie does not belong to any group.
However, when it accompanies a leader, it joins the HEO-DFS of that leader. 
We define a relationship $\prec$ between any two non-zombies $a$ and $b$ using these group identifiers as follows:
\vspace{6pt}

\noindent
$$a \prec b \ \ \ \Leftrightarrow\ \ \  (a.\level < b.\level) \lor (a.\level=b.\level \land a.\leader < b.\leader).$$
We say that agent $a$ is \emph{weaker} than $b$ if $a \prec b$, and that $a$ is \emph{stronger} than $b$ otherwise.

Initially, all agents are leaders and each forms a group of size one. In the first time step, the strongest agent at each node turns all the other co-located agents into zombies (if exists).
From then on, each leader performs a HEO-DFS while leading those zombies.
For any leader $a$, we define the \emph{territory} of $a$ as 
\vspace{6pt}

\noindent
$$V_a = \{v \in V \mid \exists b \in A: \psi(v) = b \land b.\leader = a.\id \land b.\level = a.\level \}.$$
Each time a leader $a$ visits an unsettled node, it settles one of the accompanying zombies (if exists),
giving it $a$'s group identifier $(a.\leader, a.\level)$. That is, $a$ expands its territory.
If a node outside $a$'s territory is detected during the probing process of HEO-DFS, that node is considered unsettled even though it belongs to the territory of another leader. As a result, $a$ may move forward to a node $u$ that is inside another leader's territory.
If that node $u$ belongs to the territory of a weaker group,
$a$ incorporates the settler $\psi(u)$ into its own group by giving $\psi(u)$ its group identifier $(a.\leader, a.\level)$. If a leader $a$ encounters a stronger leader or a stronger settler during its HEO-DFS, $a$ becomes a zombie and terminates its own HEO-DFS.  If there is a leader at the current location $\lctn(a)$ when $a$ becomes a zombie, $a$ joins the HEO-DFS of that leader. Otherwise, the agent $a$, now a zombie, chases a stronger leader by moving through the port $\psi(v).\nxt$ at each node $v$.
Unlike $\rootal$, a leader updates $\psi(v).\nxt$ with the most recently used port
even when it makes a backward move. This ensures that $a$ catches up to a leader eventually,
at which point $a$ joins the HEO-DFS led by the leader.


Unlike $\rootal$, a leader does not settle itself at a node in the final stage of HEO-DFS. The leader $a$ suspends the HEO-DFS if it visits an unsettled node but it has no accompanying zombies to settle at that time.
A leader who has suspended the HEO-DFS due to the absence of accompanying zombies is called a \emph{waiting leader}.
Conversely, a leader with accompanying zombies is called an \emph{active leader}.
A waiting leader $a$ resumes the HEO-DFS when a zombie catches up to $a$ at $\lctn(a)$.
As we will see later, the execution of $\general$ ensures that all agents eventually become either waiting leaders or settlers, each residing at a distinct node. 
The agents have solved the dispersion once such a configuration is reached
because thereafter no agent moves and no two agents are co-located.

When a leader $a$ encounters a zombie $z$ with the same level,
$a$ increments its level by one, and $z$ resets its level to zero.
This ``level up'' changes the identifier of $a$'s group, \ie
from $(a.\id,i)$ to $(a.\id, i+1)$ for some $i$.
By the definition of the territory, at this point, $a$ loses all nodes from its territory except for the current location.
That is, each time a leader $a$ increases its level, it restarts its HEO-DFS from the beginning.
Note that this ``level up'' event also occurs when two leaders $a,b\ (b \prec a)$ with the same level meet (and there is no stronger agent at the location) because then $b$ becomes a zombie after it finds a stronger leader $a$, which results in the event that a leader $a$ encounters a zombie with the same level, say $b$.
We have the following lemma here.
\begin{lemma}
\label{lemma:level}
The level of an agent is always at most $\log_2 k + 1$.
\end{lemma}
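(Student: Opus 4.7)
The plan is to prove, by induction on $i \ge 1$, that any agent present at level $i$ at any point in the execution can be associated with a set of at least $2^{i-1}$ distinct agents from $\aset$; since $|\aset| = k$, this forces $2^{i-1} \le k$, i.e.\ $i \le \log_2 k + 1$.

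First, I would pin down the dynamics: by the description of $\general$, the sole rule that modifies an agent's level is the \emph{level-up} event in which a leader at level $i$ encounters a zombie at level $i$, after which the leader advances to level $i+1$ and the zombie is reset to level $0$. The two-leader meeting case reduces to this, as the weaker leader first becomes a zombie (without altering its level) and then triggers the rule. Two consequences are crucial: a zombie's level never increases, and once an agent has been reset to $0$ it stays there forever. Hence any individual agent can act as the \emph{zombie-side} of a level-up event at most once in the entire execution.

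Next, to each leader $a$ currently at level $i$ I would attach a history set $T(a,i) \subseteq \aset$, defined inductively by $T(a,1) = \{a\}$ and $T(a,i+1) = T(a,i) \cup T(z,i)$ whenever $a$ levels up from $i$ to $i+1$ via a zombie $z$, where $T(z,i)$ is the snapshot of $z$'s history set taken at the first step at which $z$'s level became $i$ (at which step $z$ was necessarily itself a leader, by the monotonicity of zombie levels). The heart of the argument is the disjointness $T(a,i) \cap T(z,i) = \emptyset$: every non-root element of a history set was, at the moment of its inclusion, the zombie-side of some level-up event, and by the observation above cannot reappear in any other history set; meanwhile $a \notin T(z,i)$ because $a$ has never been consumed, and $z \notin T(a,i)$ because $z$ had not yet been consumed at the snapshot time defining $T(z,i)$. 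Disjointness then gives $|T(a,i+1)| = 2 \cdot 2^{i-1} = 2^i$ inductively, and thus $|T(a,i)| \ge 2^{i-1}$ for every leader.

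To close the lemma for non-leader agents, I would note that a zombie at level $\ell \ge 1$ was, by monotonicity, a leader at level $\ell$ at some earlier time, and a settler inherits its level directly from the leader that settled or subsequently absorbed it; in both cases the leader bound transfers immediately, while zombies at level $0$ are trivial. The step I expect to require the most care in a fully formal write-up is the unambiguous definition of the snapshot $T(z,i)$: a leader may linger at a given level across many steps (possibly as a waiting leader) before being consumed, so one must fix $T(z,i)$ to the snapshot at the first step at which $z$ attained level $i$. Apart from this bookkeeping, the disjointness claim is the load-bearing part and the only place where the ``consumed at most once'' observation is essential.
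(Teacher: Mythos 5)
Your proof is correct and rests on exactly the same observation as the paper's: each level-up consumes a distinct zombie at the same level, which is reset to level $0$ and can never serve as the zombie-side of a level-up again. The paper phrases this as a global halving count (at most $\lfloor k/2^{i-1}\rfloor$ agents ever reach level $i$), whereas you charge $2^{i-1}$ distinct agents to each level-$i$ leader via disjoint history sets --- the same pigeonhole argument read in the opposite direction, with the bookkeeping made explicit.
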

\begin{proof}
A level-up event requires one leader $a$ and one zombie $b$ with the same level. 
That zombie $b$ will get level 0.
Thereafter, $b$ never triggers a level-up event again
because the level of a leader is monotonically non-decreasing starting from level 1.
Therefore, for any $i \ge 1$, the number of agents that can reach level $i$
is at most $\lfloor k/2^{i-1}\rfloor$,
leading the lemma.
\end{proof}

Therefore, each leader performs HEO-DFS at most $O(\log k)$ times. According to the analysis in Section \ref{sec:rooted}, each HEO-DFS completes in $O(k\log \tau)$ time, which seems to imply that $\general$ finishes in $O((\log k)\cdot (k \log \tau))=O(k \log^2 k)$ time. However, this analysis
does not take into account the length of the period during which leaders suspend their HEO-DFS. 
Thus, it is not clear whether a naive implementation of the strategy described above would achieve the dispersion in $O(k \log^2 k)$ time. Following Shintaku et al.~\cite{SSKM20}, we vary the speed of zombies chasing leaders based on a certain condition, which bounds the execution time by $O(k \log^2 k)$ time.


We give zombies different chasing speeds as follows.
First, we classify zombies based on two variables $\levelL$ and $\levelS$ that each zombie manages.
For any zombie $z$, we call $z.\levelL$ and $z.\levelS$ the \emph{location level}
and \emph{swarm level} of $z$.
When a leader $z$ becomes a zombie, it initializes both $z.\levelL$ and $z.\levelS$
with its level, \ie $z.\level$. 
Thereafter, a zombie $z$ copies the level of $\psi(\lctn(z))$ to $z.\levelL$
and updates $z.\levelS$ to be $\max\{b.\level \mid b \in A_Z(\lctn(z))\}$
in every $O(1)$ time steps.
Since a zombie only chases a leader with an equal or greater level, 
$z.\levelS \le z.\levelL$ always holds.
We say that a zombie $z$ is \emph{strong} if $z.\levelS = z.\levelL$;
$z$ is \emph{weak} otherwise. 
Then, we exploit the assumption that the agents are synchronous and let weak zombies move twice as frequently as strong zombies to chase a leader. 
As we will prove later, this difference in chasing speed results in a desirable property of $\general$, namely that
$\min (\{a.\level \mid a \in A_{AL}\} \cup \{z.\levelL \mid z \in A_Z\})$
is monotone non-decreasing and increases by at least one in every $O(k \log \tau)$ steps,
where $A_{AL}$ is the set of active leaders and $A_Z$ is the set of zombies both in the whole graph,
until $A_{AL} \cup A_Z$ becomes empty. 
Thus, by Lemma \ref{lemma:level},
$A_{AL} \cup A_Z$ becomes empty and the dispersion is achieved in 
$O(k \log \tau \cdot \log k)= O(k \log^2 k)$ steps.

\begin{table}[t]
\centering
\ \caption{Slot Assignments}
\label{table:slots}
\vspace{0.3cm} 
\begin{tabular}{c c c c}
\hline
Slot Number & Role & Initiative &Pseudocode\\
\hline
Slot 1& 
Leader election
& Leaders& \ref{al:leader}\\
Slot 2& Settle, increment level, etc. &Leaders&\ref{al:probe}\\
Slots 3& Move to join $\Probe()$ &Settlers&\ref{al:settled}\\
Slots 4--8& $\Probe()$&Leaders&\ref{al:probe}\\
Slot 9--10& Chase for leaders&Zombies&\ref{al:zombie}\\
Slot 11--12& Move forward/backward&Leaders&\ref{al:leader}\\
\hline
\end{tabular}
\end{table}

\begin{figure}[t]
\centering
\includegraphics[width=0.65\linewidth,clip]{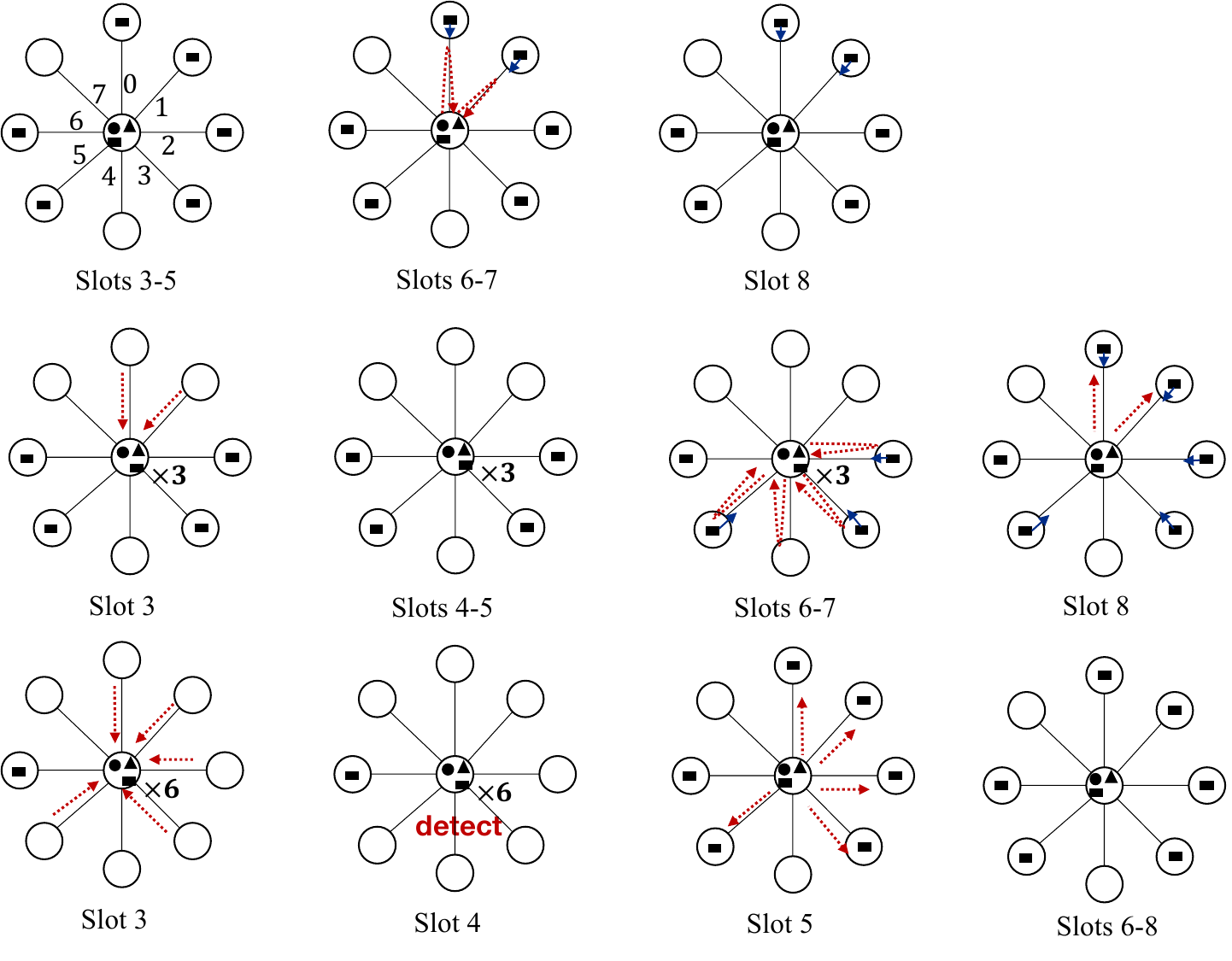}
\caption{
The behavior of explorers when their leader invokes $\Probe()$ at the center node $w$
in $\general$.
A black circle, triangle, and rectangle
represent a leader, a zombie, 
and a settler, respectively.
The integers in the top left figure represents port numbers.
}
\label{fig:probingGeneral}
\end{figure}

\subsection{Time Slots}
\label{sec:implement}
In $\general$, we group every 12 time steps into one unit, with each unit consisting of twelve \emph{slots}.
In other words, time steps $0,1,2,\dots$ are classified into twelve slots.
Specifically, each time step $t \ge 0$ is assigned to slot $(t \bmod 12) + 1$. For example, time step 26 is in slot 3, and time step 47 is in slot 12. Dividing all time steps into twelve slots helps to reduce the interference of multiple HEO-DFSs and allows us to set different ``chasing speeds'' for weak and strong zombies. Table \ref{table:slots} summarizes the roles of each slot. 

Essentially, slots 1--2 are designated for leader election (i.e., group merging), slots 3--8 for probing, slots 9--10 for zombie chasing, and slots 11--12 for forward and backward movement in DFS traversal. It is important to note that settlers always stay at their home during slots 1–2 and 9–12. Hence, once a settler leaves its home, it returns within $O(1)$ steps. This is not the case in $\rootal$, where a settler in helping mode does not return home until its leader completes the probing process. In $\general$, this frequent return home enables leaders to detect collisions with other groups: if a leader enters another group's territory, it will certainly notice the intrusion during the next slot 1, as it encounters a settler from that group.


Thus, the probing process in $\general$ slightly differs from that in $\rootal$. Consider a leader $a_l$ starting the probing process at a node $w$ (refer to Figure \ref{fig:probingGeneral}). The objective here is to identify any neighboring node of $w$ that lies outside $a_l$'s territory, if such exists. During slots 6--7, explorers at $w$ visit its neighbors and return in parallel. If an explorer $b$ encounters a settler $s$ at a node $u \in N(w)$ in slot 6, $b$ does not bring $s$ back to $w$ in slot 7. Instead, $b$ requests $s$ to enter helping mode, wherein $s$ records the port number to $w$ in the variable $s.\help \in \mathbb{N} \cup {\bot}$ (A settler $s$ is in helping mode if and only if $s.\mode \neq \bot$). The helping settler $s$ moves to $w$ via port $s.\help$ in the subsequent slot 3, joins the probing in slots 6-7, and returns to its home $u$ again in slot 8. Leader $a_l$ expects that the exact $\psi(w).\checked$ helping settlers arrives at $w$ in each slot 3. If this does not occur, $a_l$ detects a non-territorial neighbor in slot 4 by checking the $\pin$ variable of the helping settlers at $w$. Similar to $\rootal$'s probing process, the total number of explorers and helping settlers at $w$ doubles until such a neighbor is detected, concluding the process in $O(\log \tau)$ steps. Subsequently, $a_l$ reverts the helping settlers at $w$ to non-helping mode by setting their $\help$ variable to $\bot$ and instructs them to return to their homes in slot 5.


As mentioned earlier, we differentiate the chasing speed of weak zombies and strong zombies. Specifically, weak zombies move in both slots 9 and 10, while strong zombies move only in slot 10.

Before presenting the detailed implementation of $\general$ including pseudocodes in Section \ref{sec:detail},
we give a proof sketch here for the following main theorem.

\begin{theorem}
\label{theorem:general}
In the general setting, there exists an algorithm that solves the dispersion problem within $O(k \log \tau \cdot \log k)$ time using $O(\log (k+\Delta))$ bits of space per agent.
\end{theorem}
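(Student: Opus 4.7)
The plan is to establish the theorem by (i) verifying that the slot-based implementation maintains the structural invariants of the HEO-DFS strategy described in Section \ref{sec:overview}, (ii) bounding the number of HEO-DFS restarts via the level mechanism, and (iii) bounding the real time elapsed per level via a potential-style argument based on the weak/strong zombie speed differential.

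First I would prove a ``slot-correctness'' lemma for $\Probe()$ in the general setting, analogous to Lemma \ref{lemma:probe}: assuming that an active leader $a_l$ at $w$ is not disturbed by a stronger group during the $O(\log \tau)$ probing units, the probing process terminates in $O(\log \tau)$ time, correctly identifies a non-territorial neighbor if one exists, and leaves the helping settlers back at their homes. The key observation is that, thanks to slots 3 and 8, each helping settler that has been recruited is expected to re-appear at $w$ at the start of every slot 3; if the expected number $\psi(w).\checked$ of helping settlers does not appear, $a_l$ concludes in slot 4 that some neighbor is outside its territory. To handle interference, I would argue that whenever a stronger group enters $w$'s territory during probing, $a_l$ will become a zombie in the next slot 1--2 (where group merging is performed via leader election), so $a_l$ simply abandons the probing; this does not affect correctness because the surviving leader will initiate a fresh probing. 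Coupled with the guarantee that settlers always return home within $O(1)$ slots, we obtain that any two groups colliding at a node detect each other within $O(1)$ time.

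Next, I would reuse the argument from Section \ref{sec:rooted} to show that a single HEO-DFS by an active leader $a$ at a fixed level $\ell$, \emph{assuming no interference from stronger groups}, completes within $O(k \log \tau)$ time: the leader performs at most $2(k-1)$ forward and backward moves, each separated by a $\Probe()$ that costs $O(\log \tau)$ time. The crucial additional ingredient is the progress measure
\[
\Phi(t) = \min\bigl(\{a.\level \mid a \in A_{AL}(t)\} \cup \{z.\levelL \mid z \in A_Z(t)\}\bigr),
\]
with the convention $\Phi(t) = \infty$ when the set is empty. I would prove that $\Phi(t)$ is monotone non-decreasing and that it strictly increases within every window of $O(k \log \tau)$ steps, as long as $A_{AL} \cup A_Z \neq \emptyset$. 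The monotonicity follows from the facts that a leader never decreases its level (until it becomes a zombie), and that a zombie $z$ only updates $z.\levelL$ to the level of the local settler, which is at least $\Phi$ by the definition of a settler belonging to some group at that level.

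The main obstacle, and the heart of the argument, is proving the strict increase of $\Phi$. Here I would adapt the potential-function analysis of Shintaku et al.~\cite{SSKM20} to the HEO-DFS setting. Let $\ell$ be the current value of $\Phi$. While $\Phi = \ell$, either (a) some active leader at level $\ell$ is running its HEO-DFS, or (b) only zombies at location-level $\ell$ remain. In case (a), the leader either finishes its HEO-DFS in $O(k \log \tau)$ time (after which, if no zombies arrive, it becomes a waiting leader and is removed from $A_{AL}$) or collides with another level-$\ell$ group, triggering a level-up event. In case (b), the key point is the differential chasing speed: a weak zombie (with $\levelS < \levelL$) moves in both slots 9 and 10, so it chases a stronger leader at twice the rate of that leader's backward/forward moves, guaranteeing that it catches up within $O(k)$ units; a strong zombie (with $\levelS = \levelL$) moves only in slot 10, matching the leader's pace, which prevents it from overtaking another zombie it is following. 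I would argue by a bookkeeping argument that these rules ensure every zombie at location-level $\ell$ either catches a leader or is overtaken by a higher-level group within $O(k \log \tau)$ time, after which $\Phi$ increases. Combining this with Lemma \ref{lemma:level} yields the $O(k \log \tau \cdot \log k)$ time bound. The space bound follows by inspection: each agent stores $\leader$, $\level$, $\levelL$, $\levelS$, $\parent$, $\nxt$, $\checked$, $\help$, $\mode$, $\settled$, and $\pin$, each occupying $O(\log(k+\Delta))$ bits.
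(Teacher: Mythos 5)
Your proposal follows essentially the same route as the paper: the same progress measure $\min(\{a.\level \mid a \in A_{AL}\} \cup \{z.\levelL \mid z \in A_Z\})$, shown monotone and strictly increasing every $O(k\log\tau)$ steps via the weak/strong zombie speed differential (the paper's Lemma \ref{lemma:main}), combined with the $O(\log k)$ level bound of Lemma \ref{lemma:level} and a space bound by inspection of the variables. The only cosmetic difference is that the paper organizes the strict-increase argument as an explicit three-phase elimination (weak zombies at location level $i$ vanish and are never recreated, then active leaders at level $i$, then strong zombies), which your sketch subsumes under the "bookkeeping argument."
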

\begin{psketch}

It suffices to show that $A_{AL} \cup A_Z$ becomes empty within $O(k \log \tau \cdot \log k)$ steps, at which point every agent is either a waiting leader or a settler, thereby achieving dispersion.
We obtain this bound from Lemma \ref{lemma:level} and the fact that
$\alpha = \min (\{a.\level \mid a \in A_{AL}\} \cup \{z.\levelL \mid z \in A_Z\})$
increases by at least one in every $O(k \log \tau)$ time steps
unless $A_{AL} \cup A_Z$ becomes empty (Lemma \ref{lemma:main} in Section \ref{sec:proof_general}).
We can prove Lemma \ref{lemma:main} roughly as follows.
Suppose $\alpha = i$. First,
all weak zombies at location level $i$ vanish within $O(k)$ steps as they move faster than leaders and strong zombies, eventually encountering a leader, higher-level settlers or stronger zombies. Thereafter, no new weak zombies at location level $i$ are created. Subsequently, without weak zombies at location level $i$, waiting leaders at level $i$ do not resume active HEO-DFS without increasing its level, which leads to the disappearance of active leaders at level $i$ within $O(k \log \tau)$ steps. Finally, strong zombies chasing leaders at level $i$ catch up to those leaders within $O(k)$ steps or find higher-level settlers, resulting in the increase of their location level. Hence, all zombies with location level $i$ and active leaders at level $i$ are eliminated within $O(k \log \tau)$ steps.
\qed
\end{psketch}

\setcounter{AlgoLine}{0}
\begin{algorithm}[t]
\caption{The behavior of a \emph{leader} $a$}
\label{al:leader}
\While{$\tr$}{
 \CO{Slot 1 begins}
 Let $w=\lctn(a)$.\;
  \If{$\exists b \in A_L(w) \cup A_S(w):a\prec b$ }{
    $a.\levelL \gets a.\levelS \gets a.\level$\;
    $a.\leader \gets b.\leader$
    \tcp*{$a$ becomes a zombie and stops Algorithm \ref{al:leader}}
 } 
 \CO{Slot 2 begins}
 \If(\tcp*[f]{$a$ is an active leader if $A(w)\neq\{a\}$}){$A(w)\neq\{a\}$}{ 
\If{$\psi(w)= \bot\  \lor \ \psi(w) \prec a$}{
\If{$\psi(w) = \bot$}{
Settle one zombie in $A_Z(w)$ at $w$\;
}
 $\psi(w).\parent \gets a.\parent$\tcp*{Initially, $a.\parent = \bot$}
}
 \If{$\exists b \in A_Z(w): a.\level = b.\level$}{
 $(a.\level,b.\level) \gets (a.\level + 1,0)$\;
 $\psi(w).\parent \gets \bot$\;
 $a.\needInit \gets \tr$\;
}
$(\psi(w).\leader,\psi(w).\level) \gets (a.\id,a.\level)$\;
\If(\tcp*[f]{Initially, $a.\needInit = \tr$}){$a.\needInit = \tr$}{
$(\psi(w).\nxt, \psi(w).\checked,\psi(w).\help,\psi(w).\done) \gets (\bot, -1,\bot,\fl)$\;
$a.\needInit \gets \fl$
}
$\Probe(a)$ \tcp*{See Algorithm \ref{al:probe}}
\If{$\psi(w).\done = \tr$}{
\CO{Slot 11 begins}
\If{$\psi(w).\nxt = \bot$}{$\psi(w).\nxt \gets \psi(w).\parent$ \tcp*{for backward move}}
All agents in $A(w)\setminus \{\psi(w)\}$ move to $N(w,\psi(w).\nxt)$\;
\CO{Slot 12 begins}
$a.\parent \gets a.\pin$\;
$a.\needInit \gets \tr$\;
}
}
}
\end{algorithm}

\begin{algorithm}
\caption{$\Probe(a)$}
\label{al:probe}
\CO{Slot 4 begins}
Let $w=\lctn(a)$.\;
$\psi(w).\nxt(w) \gets \begin{cases}\min P & \text{if\ } P \neq \emptyset\\ \bot & \text{otherwise},\end{cases}$\\
\ \ \ where $P=[0,\psi(w).\checked] \setminus \{b.\pin \mid b \in A_S(w)\setminus\{\psi(a)\}\}$ \;
$b.\help \gets \bot$ for all $b \in A_S(w)$ with $b \prec a$.\;
Let all agents $b \in A_S(w)$ with $b \prec a$ go back to their homes\;
\uIf{$\psi(w).\nxt \neq \bot \lor \psi(w).\checked = \delta_w - 1$}{
\CO{Slot 5 begins}
Execute $b.\help \gets \bot$ for each $b\in A_S(w)\setminus \{\psi(w)\}$\;
Let all agents in $A_S(w)\setminus \{\psi(w)\}$ go back to their homes.\;
$\psi(w).\done \gets \tr$\;
}\Else{
\CO{Slot 6 begins}
Let $\{a_1,a_2,\dots,a_x\}$ be the set of agents in $A(w) \setminus \{\psi(w)\}$\;
Let $\Delta'=\min(x,\delta_w-1-\psi(w).\checked)$\;
Let $u_i=N(w,i+\psi(w).\checked)$ for $i=1,2,\dots,\Delta'$\;
\For{\textbf{each} $a_i \in \{a_1,a_2,\dots,a_{\Delta'}\}$ \textbf{in parallel}}{
$a_i$ moves to $u_i$.\;
\CO{Slot 7 begins}
\uIf{$(a_i.\leader,a_i.\level)=(\psi(u_i).\leader,\psi(u_i).\level)$}{
$a_i.\found \gets \tr$\;
$\psi(u_i).\help \gets a_i.\pin$
}
\Else{
$a_i.\found \gets \fl$
}
Move to $N(u_i,a_i.\pin)$\;
}
\CO{Slot 8 begins}
\If{$\exists i \in [1,\Delta']: a_i.\found = \fl$}{
    $\psi(w).\nxt \gets i+\psi(w).\checked$
}
$\psi(w).\checked \gets \psi(w).\checked + \Delta'$\;
Let all agents in $A_S(w)\setminus \{\psi(w)\}$ go back to their homes.\;
}
\end{algorithm}

\begin{algorithm}
\caption{The behavior of a \emph{settler} $s$ in Slot 3}
\label{al:settled}
\CO{Slot 3 begins}
Move to $N(\lctn(s),s.\help)$ if $s.\help \neq \bot$.
\end{algorithm}

\begin{algorithm}
\caption{The behavior of a \emph{zombie} $z$ in Slots 9 and 10}
\label{al:zombie}
\CO{Slot 9 begins}
$(z.\levelL,z.\levelS) \gets (\psi(w).\level,\max\{z'.\level \mid z' \in A_Z(\lctn(z)\})$\;
\If{$A_L(\lctn(z))=\emptyset$ and $z$ is a weak zombie}{
Move to $N(\lctn(z),\psi(\lctn(z)).\nxt)$
}
\CO{Slot 10 begins}
\If{$A_L(\lctn(z))=\emptyset$}{
Move to $N(\lctn(z),\psi(\lctn(z)).\nxt)$
}
\end{algorithm}

\subsection{Detail Implementation of General Dispersion}
\label{sec:detail}

The pseudocode for the $\general$ algorithm is shown in Algorithms \ref{al:leader}, \ref{al:probe}, \ref{al:settled}, and \ref{al:zombie}. In slots 1, 2, 4--8, 11, and 12, agents operate only under the instruction of a leader. Algorithms \ref{al:leader} and \ref{al:probe} define how each leader $a$ operates and gives instructions in those slots.  
Algorithm \ref{al:settled} defines the behavior of settlers in slot 3.
Algorithm \ref{al:zombie} specifies the behavior of zombies in slots 9 and 10.
Note that each agent needs to manage an $O(1)$-bit variable to identify the slot of the current time step, but for simplicity, the process related to its update is not included in the pseudocode because it can be implemented in a naive way.


First, we explain the behavior of a leader $a$.
Let $w$ be the node where $a$ is located in slot 1.
We make leader election in slot 1 (lines 4--6).
Leader $a$ becomes a zombie when it finds a stronger leader or settler on $w$.
If $a$ becomes a zombie, it no longer runs Algorithms \ref{al:leader} and \ref{al:probe}, and runs only Algorithm \ref{al:zombie}. 
Consider that $a$ survives the leader election in slot 1. 
In slot 2, if there are no agents other than $a$ on $w$, $a$ is a waiting leader and does nothing until the next slot 1. Otherwise, the leader $a$ (i) settles one of the accompanying zombies if $w$ is unsettled, (ii) updates its level if it finds a zombie with the same level, and (iii) gives the settler $\psi(w)$ its group identifier $(a.\id,a.\level)$ (lines 9--17).
Note that settlers may leave their homes only in slots 3--8 (to join $\Probe()$),
thus $a$ can correctly determine whether $\psi(w) = \bot$ or not here (lines 9--10).
If $\psi(w) \prec a$, this procedure incorporates $\psi(w)$ into $a$'s group, \ie expands the territory of $a$.
Each leader $a$ manages a flag variable $a.\needInit \in \{\fl,\tr\}$, initially set to $\tr$.
This flag is raised each time $a$ requires probing, \ie after it makes a forward or backward move (line 29), 
and when it increases its level (line 16).  
If the flag is raised, it initializes the variables used for $\Probe()$, say $\psi(w).\nxt$, $\psi(w).\checked$, 
$\psi(w).\help$, and  $\psi(w).\done$ in slot 2 (line 19).

Thereafter, $a$ invokes $\Probe()$ at the end of slot 2.
This subroutine runs in slots 4--8. 
While $\Probe()$ in $\rootal$ returns the control to the main function after completing the probing,
\ie determining whether or not an unsettled neighbor exists, 
$\Probe()$ in $\general$ returns the control each time slot 8 ends
even if it does not complete the probing. 
Consider that there are $x-1$ accompanying zombies when a leader $a$ begins the probing. 
First, a leader $a$ and the $x-1$ accompanying zombies join the probing. Each of them, say $b$, moves from a node $w$ to one of its neighbors $u \in N(w)$ in slot 6 (line 47) and goes back to $w$ in slot 7 (line 54).
If $b$ finds a settler in the same group at $u$, 
it sets $\psi(u).\help$ to $b.\pin$ (line 51).
As long as $s.\help \neq \bot$, a settler $s$ at a node $v$ goes to a neighbor $N(v,s.\help)$ in slot 3 (line 60, Algorithm \ref{al:settled}). 
Hence, in the next slot 3, that settler $\psi(u)$ goes to $w$.
If there are $2x$ agents at $w$ excluding $\psi(w)$, 
those $2x$ agents perform the same process in the next slots 6 and 7,
that is, they go to unprobed neighbors, update the $\help$ of settlers in the same group (if exists),
and go back to $w$. In slot 8, $a$ sends the helping settlers back to their home.
The number of agents joining the probing at $w$,
\ie $|A(w) \setminus \{\psi(w)\}|$, doubles at each iteration of this process
until they find a node without a settler in the same group
or finish probing all neighbors in $N(w)$. 
Thus, like $\rootal$, the probing finishes in $O(\log \tau)$ time steps.
At this time, $\psi(w).\nxt = \bot$ holds if
all neighbors in $N(w)$ are settled by settlers in the same group.
Otherwise, $N(w,\psi(w).\nxt)$ is unsettled or settled by a settler in another group. 
Then, in the next slot 5, $a$ resets the $\help$ of all settlers at $w$ to $\bot$
except for $\psi(w)$, lets them go back to their homes, and sets $\psi(w).\done$ to $\tr$,
indicating that the probing is done (lines 38--40).
The probing process described above may be prevented by a stronger leader $b$
when $b$ visits a node $v \in N(w)$ such that $\psi(v)$ belongs to $a$'s group and $\psi(v).\help \neq \bot$.
Then, $b$ incorporates $\psi(v)$ into $b$'s group,
and set $\psi(v).\help$ to $\bot$ (line 19), so $\psi(v)$ never goes to $w$ to help $a$'s probing. 
However, this event actually speeds up $a$'s probing: $a$ identifies this event when noticing that $\psi(v)$ does not arrive at $w$ in the next slot 4. As a result, $a$ can set $\psi(w).\nxt$ to $p$ where $N(w,p) = v$ (lines 32--34).

Note that, even during the probing process at node $w$, leader $a$ might become a zombie if it meets a stronger leader $b$ in slot 1. Some settlers might then move to $w$ in the next slot 3 to help $a$, not knowing $a$ is now a zombie. In these situations, $b$ changes the $\help$ of these settlers to $\bot$ and sends them back to their homes in slot 4. Thereafter, those settlers remain at their home at least until they are incorporated into another group. 


If a leader $a$ at $w$ observes $\psi(w).\done = \tr$, 
it makes a forward or backward move in slot 11 (lines 22--29). 
Each time $a$ makes a forward or backward move to a node $u$,
it remembers $a.\pin$ in $a.\parent$ after the move (line 28).
This port number will be stored on the variable $\psi(u).\parent$ when $a$ settles a zombie on $u$ or $a$ incorporates $\psi(u)$ from the territory of another group.
Note that this event occurs only when the last move is forward.
Thus, like $\rootal$, $a$ constructs a DFS tree in its territory.
It is inevitable to use a variable $a.\parent$ tentatively since
$a.\pin$ is updated every step by definition of a special variable $\pin$
and $a$ may become a waiting leader after moving to $u$.
Unlike $\rootal$, $a$ records the most recently used port to move in $\psi(\lctn(a)).\nxt$ even when it makes a  backward move (lines 24--25). This allows a zombie to chase a leader. 


The behavior of zombies in slots 9 and 10 is very simple (lines 61--67).
A zombie always updates its location and swarm levels in slot 9 (line 62).  
A zombie $z$ not accompanying a leader always chases a leader by moving through the port $\psi(\lctn(z)).\nxt$.
As mentioned earlier, we differentiate the chasing speed of weak zombies and strong zombies. Specifically, weak zombies move in both slots 9 and 10, while strong zombies move only in slot 10 (lines 63--67).

\begin{lemma}
\label{lemma:levelL_is_increasing}
The location level of a zombie 
is monotonically non-decreasing.
\end{lemma}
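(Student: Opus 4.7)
The plan is to reduce the lemma to two monotonicity invariants and combine them. The variable $z.\levelL$ is modified only at line 5 of Algorithm~\ref{al:leader}, where a just-demoted leader initializes $z.\levelL \gets z.\level$, and at line 62 of Algorithm~\ref{al:zombie}, executed in every slot 9, which sets $z.\levelL \gets \psi(\lctn(z)).\level$. Hence it suffices to show that between any two consecutive slot-9 updates for the same zombie, the new value does not drop below the previous one.

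I would first establish \textbf{Fact~1}: for every node $u$, $\psi(u).\level$ is monotonically non-decreasing over time. This is straightforward because $\psi(u).\level$ is written only at line 17 of Algorithm~\ref{al:leader} by a leader $a$ located at $u$ in slot 2, and the leader-election check in slot 1 (lines 4--6) guarantees that the existing settler, if any, satisfies $a \not\prec \psi(u)$, which implies $a.\level \geq \psi(u).\level$ at the moment of the write.

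Next I would prove \textbf{Fact~2}: whenever a zombie $z$ follows a chase pointer and moves from $w$ to $v=N(w,\psi(w).\nxt)$ in slot 9 or 10, by the slot 9 in which $z$ next reads $\psi(v).\level$ we have $\psi(v).\level \geq \psi(w).\level$, evaluated at the moment of $z$'s move. I would observe that the current $p=\psi(w).\nxt$ must have been written by some leader $a_w$ during its most recent visit to $w$, and that at that moment $\psi(w).\level = a_w.\level$. The assumption $\psi(w).\nxt \neq \bot$ rules out any stronger leader visiting $w$ in the meantime, since any such visit would reset $\psi(w).\nxt$ to $\bot$ at line 19 via $\needInit$; hence $\psi(w).\level$ has remained equal to $a_w.\level$. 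I would then split into two sub-cases: (a) $p$ was produced by the probing in slot 4 or 8 of Algorithm~\ref{al:probe}, so $a_w$ will forward-move to $v$ at the next slot 11; or (b) $p$ was produced by the backward-move preparation at line 24 of Algorithm~\ref{al:leader}, so $v$ is $a_w$'s DFS parent and $a_w$ retreats to $v$. In either sub-case, after $a_w$'s next slot-2 execution at $v$, the settlement/incorporation logic at lines 8--17 of Algorithm~\ref{al:leader} forces $\psi(v).\level \geq a_w.\level = \psi(w).\level$, and by Fact~1 this lower bound is preserved thereafter.

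The hard part will be the timing argument in Fact~2, specifically the sub-case where $a_w$ is a \emph{waiting} leader that arrives alone at an unsettled $v$: then $\psi(v)$ remains $\bot$ until the first zombie reaches $v$, and I must verify that this first zombie (which might be $z$ itself, in which case the lemma holds vacuously for $z$) triggers settlement in the next slot 2 before $z$'s next slot-9 read, so that $\psi(v).\level$ is well defined and at least $a_w.\level$. I also need to iterate the chase-edge inequality across the up to two consecutive moves a weak zombie may execute (one in slot 9 and one in slot 10) between successive slot-9 updates, applying the same case analysis at each edge. Combining the two facts, a zombie that does not move between consecutive slot-9s inherits non-decreasingness directly from Fact~1, whereas a zombie that moves inherits it from one or two applications of Fact~2.
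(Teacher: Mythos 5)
Your proposal is correct and follows essentially the same route as the paper's proof: settler/leader levels are monotone non-decreasing, the chase pointer $\psi(w).\nxt$ leads to a node whose settler level is at least $\psi(w).\level$ because the leader that wrote the pointer carries and propagates its level to the destination, and a zombie accompanying a leader inherits that leader's (non-decreasing) level via the co-located settler. Your version is more careful about the timing of slot-9 reads and the waiting-leader/unsettled-destination corner case, which the paper glosses over, but the underlying argument is the same.
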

\begin{proof}
Neither a leader nor a settler decreases its level in $\general$. 
When a zombie $z$ does not accompany a leader, it chases a leader through port $\psi(\lctn(z)).\nxt$.
This port $\psi(\lctn(z)).\nxt$ is updated only if a leader makes a forward or backward move
from $\lctn(z)$, and the leader updates the level of $\psi(N(\lctn(z),\psi(\lctn(z)).\nxt))$
if it is smaller than its level. Thus, a zombie never decreases its location level
by chasing a leader. 
When a zombie $z$ accompanies a leader, 
the leader copies its level to $\psi(\lctn(z)).\level$ in slot 2, which is copied to $z.\levelL$ in slot 8.
The leader that $z$ accompanies may change but does not change to a weaker leader. 
Thus, a zombie never decreases its location level when accompanying a leader.
\end{proof}

\subsection{Proofs of Theorem \ref{theorem:general}}
\label{sec:proof_general}
Remember that $A_Z$ and $A_{AL}$ are the set of zombies and the set of active leaders, respectively,
in the whole graph. We have the following lemma.
\begin{lemma}
\label{lemma:weak_disppear}
For any $i \ge 0$, the number of weak zombies with a location level $i$ is monotonically non-increasing
starting from any configuration where 
$\min (\{a.\level \mid a \in A_{AL}\} \cup \{z.\levelL \mid z \in A_Z\})=i$.
\end{lemma}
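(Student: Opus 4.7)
\begin{psketch}
The plan is to track $W(t)$, the number of weak zombies with location level $i$, at the end of each slot~$10$ (a natural snapshot, since $z.\levelL$ and $z.\levelS$ are refreshed in slot~$9$), and to show that $W$ is non-increasing between consecutive snapshots under the hypothesis $\min=i$.

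I would first record the invariants that pave the way. Lemma~\ref{lemma:levelL_is_increasing} gives that location levels never decrease; the leader-level and the settler-level fields in Algorithm~\ref{al:leader} are monotone non-decreasing, because a leader strictly weaker than the current $\psi(w)$ has been demoted to a zombie in slot~$1$ before it can execute the update $\psi(w).\level \gets a.\level$. Combined with $\min=i$, this yields $\psi(w).\level \ge i$ at every node that hosts a zombie. I would then enumerate the events in one cycle that can affect $W$: (i)~a weak-at-$i$ zombie is settled in slot~$2$; (ii)~its $z.\levelL$ is refreshed above $i$ in slot~$9$; (iii)~its $z.\levelS$ is refreshed to $\ge i$; (iv)~a pre-existing zombie newly becomes weak because its $z.\levelS$ drops; and (v)~a new zombie created in slot~$1$ satisfies the weak-at-$i$ criteria after slot~$9$. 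Events (i)--(iii) only decrease $W$, so the real work is (iv) and (v).

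The main obstacle is event (v). Suppose leader $a$ becomes a zombie at $w$ in slot~$1$ because of a stronger agent $b$ with $a \prec b$. If any leader survives at $w$ it sets $\psi(w).\level$ to its own level $\ell$, and $\ell \ge i$ by the active-leader invariant. If $\ell > i$, the new zombie's location level after slot~$9$ exceeds $i$, so it is not counted. The delicate subcase is $\ell = i$, which forces $a.\level < i$ (otherwise $a.\levelS \ge a.\level \ge i$ and the new zombie is strong); hence $a$ must have been a waiting leader with $A(w) = \{a\}$ and $\psi(w) = \bot$ prior to $b$'s arrival. At this point I split by slot-$2$ behaviour: (a)~the level-up rule fires, bumping $\ell$ to $i+1$ and pushing the new zombie to location level $i+1$; (b)~the surviving leader settles one of its accompanying zombies, which I would show was already weak at location level $i$ at the previous snapshot (since at $b$'s previous location $\psi.\level$ equalled $i$ and the co-located zombie-levels were all $<i$), so its departure from $A_Z$ balances the addition of $a$; or (c)~the surviving leader settles $a$ itself, so $a$ does not contribute to $W$ at all while $b$'s accompanying zombies translate from the old location to $w$ with the same weak-at-$i$ count.

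Finally, for event (iv), I would use the fact that accompanying zombies move together in slots~$11$--$12$, so that a zombie's $z.\levelS$ falls only when co-located zombies are either settled (covered by event~(i)) or transported by their leader, in which case a symmetric accounting at the destination preserves the count. I expect subcase~(b) above and the bookkeeping for event~(iv) to be the most delicate steps, because both require a non-local amortized argument rather than a purely pointwise one.
\qed
\end{psketch}
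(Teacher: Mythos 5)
Your proposal is correct in outline but takes a genuinely different route from the paper. The paper's proof is a short \emph{pointwise} argument: it claims a weak zombie with location level $i$ is never newly created, checking only two transitions --- a leader \emph{of level $i$} becoming a zombie (which yields a \emph{strong} zombie at location level $i$, by line 5 of Algorithm \ref{al:leader}), and a strong zombie already at location level $i$ becoming weak (ruled out unless its location level first increases). Your snapshot-and-accounting argument instead permits new weak-at-$i$ zombies to appear provided each appearance is charged to a simultaneous disappearance. This buys you coverage of a case the paper's proof silently skips: your event (v) with $\ell=i$ and $a.\level<i$, \ie a \emph{waiting} leader of level $j<i$ (unconstrained by the hypothesis, which ranges only over $A_{AL}$ and $A_Z$) demoted by an arriving level-$i$ leader. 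Such an agent first becomes a strong zombie at location level $j<i$ and only at the next slot 9 acquires $\levelL=i$ with $\levelS<i$; neither of the paper's two checks applies to it, so strictly the ``never newly created'' claim fails there and only the monotonicity of the count survives --- precisely via your subcases (b)/(c): the arriving leader must settle one zombie at the (unsettled) waiting node, and that zombie is either the demoted leader itself or one of the arriving leader's companions, which was already weak at location level $i$ at the previous snapshot. So your decomposition is the more careful one. Two points still need nailing down. First, the parenthetical in event (v) is slightly off: if $a.\level=i$ the level-up rule resets $a.\level$ to $0$, so the new zombie need not be strong; what saves that subcase is that its location level becomes $i+1$, as your subcase (a) correctly says. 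Second, your ``active-leader invariant'' ($\ell\ge i$ for every active leader after $C$) is not immediate, since a waiting leader of level $j<i$ could in principle be \emph{activated} by a zombie reaching it; you need the observation that a chasing zombie with location level $\ge i$ follows $\nxt$ pointers only inside territories of level $\ge i$ and hence reaches only waiting leaders of level $\ge i$, so low-level waiting leaders are only ever demoted, never activated, while the minimum equals $i$. With those two gaps filled, your amortized argument closes and is, if anything, a more complete justification of the lemma than the one given in the paper.
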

\begin{proof}
Let $C$ be a configuration
where $\min (\{a.\level \mid a \in A_{AL}\} \cup \{z.\levelL \mid z \in A_Z\})=i$.
When a leader with level $i$ becomes a zombie, its location level is $i$ (line 5).
So, a leader with level $i$ may become a strong zombie with a location level $i$
but never becomes a weak zombie with a location level $i$. 
The swarm level of a zombie decreases only when the zombie accompanies a leader (and this leader settles another zombie). 
Thus, a strong zombie with a location level $i$ that does not accompany a leader
cannot become a weak zombie without increasing its location level.
Moreover, starting from $C$, a strong zombie with location level $i$
must increase its location level when it encounters a leader in slot 1. 
Hence, the number of weak zombies with a location level $i$ is monotonically decreasing. 
\end{proof}

\begin{lemma}
\label{lemma:main}
$\min (\{a.\level \mid a \in A_{AL}\} \cup \{z.\levelL \mid z \in A_Z\})$
is monotone non-decreasing and increases by at least one in every $O(k \log \tau)$ time steps
unless $A_{AL} \cup A_Z$ becomes empty.
\end{lemma}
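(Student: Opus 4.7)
The plan is to fix $i$ and a configuration $C$ in which $\alpha := \min(\{a.\level \mid a \in A_{AL}\} \cup \{z.\levelL \mid z \in A_Z\}) = i$ and $A_{AL} \cup A_Z \neq \emptyset$, and then show both the non-decrease of $\alpha$ and that either $A_{AL} \cup A_Z$ becomes empty or $\alpha \ge i+1$ within $O(k \log \tau)$ time steps after $C$. The non-decrease claim is a case analysis over every event that modifies the multiset in the definition of $\alpha$: the level of an active leader only increases (line 14 of Algorithm \ref{al:leader}), the location level of a zombie is non-decreasing by Lemma \ref{lemma:levelL_is_increasing}, a leader-to-zombie transition at line 5 inherits the leader's current level as $\levelL$, and a waiting-leader-to-active transition preserves the leader's level, which is bounded below by the location level of the arriving zombie.

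I decompose the second claim into three consecutive phases, mirroring the proof sketch of Theorem \ref{theorem:general}. In Phase 1, within $O(k)$ steps all weak zombies with location level $i$ disappear. The decisive fact is that weak zombies move in both slots 9 and 10 whereas strong zombies and leaders effectively advance at most one edge per twelve-slot unit. Chasing along the $\psi(\cdot).\nxt$ pointers, a weak zombie closes one edge of distance per unit on its target; since the number of settled nodes whose settler has level $\le i$ is at most $k$, within $O(k)$ units the zombie either catches a leader (triggering absorption or a level-up) or reaches a settler at level $> i$, raising its own location level above $i$. Lemma \ref{lemma:weak_disppear} then guarantees that no fresh weak zombie at location level $i$ is produced thereafter.

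In Phase 2, once no weak zombies at location level $i$ remain, every active leader at level $i$ is eliminated from that level within $O(k \log \tau)$ additional steps. The key observation is that, absent weak zombies at location level $i$, any zombie $z$ co-located with an active leader $a$ at $a$'s node $w$ (where $\psi(w).\level = i$) must be strong and hence has $z.\levelS = z.\levelL = i$, which forces some co-located zombie $b$ with $b.\level = i$ to exist; the level-up at line 14 then fires in slot 2, advancing $a$ to level $i+1$. Consequently, while $a$ remains at level $i$ it is executing an HEO-DFS over a territory of at most $k$ nodes, and by the same counting argument as in the proof of Theorem \ref{theorem:rootal} combined with Lemma \ref{lemma:probe}, this traversal completes within $O(k \log \tau)$ steps; by the end $a$ has either become a waiting leader (no longer contributing to $\alpha$), leveled up, or been converted into a zombie in slot 1. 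Phase 3 then handles the strong zombies still at location level $i$: they chase along the $\psi(\cdot).\nxt$ pointers through a DFS-tree path now free of level-$i$ active leaders, so within $O(k)$ further steps each such zombie reaches a settler of level $> i$ and updates its own $\levelL$.

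Summing the three phases yields the advertised $O(k \log \tau)$ bound. The main obstacle I anticipate is the bookkeeping in Phase 2 for a leader that oscillates between waiting and active states, and for the transient moment when a zombie has just arrived in the previous unit's slot 10 and has not yet executed slot 9; I must argue carefully, using Lemma \ref{lemma:weak_disppear} and the exact slot ordering, that any such reactivation or re-synchronization either strictly advances the leader's level or is immediately followed by completion of its bounded-size HEO-DFS, so that no level-$i$ active leader can persist indefinitely.
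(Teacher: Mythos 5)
Your proposal follows essentially the same route as the paper's proof: the same three-phase elimination (weak zombies at location level $i$ vanish first via the speed differential and Lemma \ref{lemma:weak_disppear}, then active level-$i$ leaders cannot be reactivated and finish their bounded HEO-DFS in $O(k\log\tau)$ steps, then the remaining strong zombies at location level $i$ catch up or raise their location level), with the monotonicity handled by Lemma \ref{lemma:levelL_is_increasing} and the monotone levels of leaders. The only divergences are cosmetic (you budget Phase 1 at $O(k)$ where the paper allows $O(k\log\tau)$ for a weak zombie to actually settle or turn strong after joining a leader), and they do not change the bound.
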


\begin{proof}
Let $i$ be an integer $i \ge 0$ and $C$ a configuration where $\min (\{a.\level \mid a \in A_{AL}\} \cup \{z.\levelL \mid z \in A_Z\})=i$.
It suffices to show that leaders with level $i$ and zombies with location level $i$ disappear
in $O(k \log \tau)$ time steps starting from $C$. 

Consider an execution starting from $C$.
By Lemma \ref{lemma:weak_disppear}, a weak zombie with location level $i$ is never newly created
in this execution. 
Let $z$ be any weak zombie with a location level $i$ that does not accompany a leader
in a configuration $C$.
In every 12 slots, $z$ moves twice, while a strong zombie and a leader move only once, excluding the movement for the probing. Therefore, $z$ catches up to a strong zombie and becomes strong too, 
catches up to a leader with level $i$, or increases its location level in $O(k)$ time steps. 
When $z$ catches up to a leader, it joins the HEO-DFS of the leader, or this leader becomes a zombie.
In the latter case, $z$ becomes a strong zombie. 
Thus, $z$ settles or becomes a strong zombie (with the current leader)
in $O(k \log \tau)$ time steps. Therefore, the number of weak zombies with location level $i$
becomes zero in $O(k \log \tau)$ steps.
After that, no waiting leader with level $i$
resumes its HEO-DFS without increasing its level because there is no weak zombie with location level $i$.
Therefore, every active leader with location level $i$ becomes a zombie with location level at least $i+1$
or a waiting leader in $O(k \log \tau)$ steps. 
Thus, active leaders with location level $i$ also disappear in $O(k \log \tau)$ steps.
From this time, no leader moves in the territory of a group with level $i$ or less. 
Hence, every strong zombie with location level $i$ increases its location level
or catches up to a waiting leader. Since the level of a waiting leader is at least $i$,
the latter event also increases $z$'s level by at least one.
\end{proof}

Since an agent in $\general$ manages only a constant number of variables, each with $O(\log (k+\Delta))$ bits,
Lemmas \ref{lemma:level} and \ref{lemma:main} 
yield Theorem \ref{theorem:general}.

\setcounter{AlgoLine}{0}
\begin{algorithm}
\caption{$\rootopt$}
\label{al:root_optimal}
$\Settle(b,s,\bot)$, where $s$ is the starting node
and $b$ is the agent with the smallest ID in $\aset(s)$\;
$\Probe()$\;
\While{$\amax.\settled = \bot$}{
Let $w = \lctn(\amax)$\;
\uIf{$\uemp(w) \neq \bot$}{
Chose any neighbor $u \in \uemp(w)$ \tcp*{$u$ is an unsetteled node here}
All explorers in $\aset(w)$ go to $u$\;
$\Settle(b,u,\amax.\pin)$,
where $b'$ is the agent with the smallest ID in $\aset(u)$\;
$\Probe()$
}
\Else{
All explorers in $\aset(w)$ go back to node $N(w,\psi(w).\parent)$.\;
}
}
\modulespace

\Fn{$\Settle(b,v,p)$}{
$b.\settled \gets \top$\;
$b.\parent \gets p$\;
\For{$i\in [0,\delta_v]$}{$b.\checked[i] \gets \bot$}
}
\modulespace

\Fn{$\Probe()$}{
Let $w=\lctn(\amax)$\;
Let $\ell=|A(w) \setminus \{\psi(w)\}|$ \tcp*{$\ell$ is the number of explorers}
\While{$\ubot(w) \neq \emptyset \land |\uemp(w)|< \ell$}{
Let $a_1,a_2,\dots,a_x$ be the agents in $A(w) \setminus \{\psi(w)\}$\tcp*{$x \le \ell$}
Let $\Delta' = \min(x,|\ubot(w)|)$\;
Choose $\Delta'$ neighbors $u_1,u_2,\dots,u_{\Delta'}$ from $\ubot$ arbitrarily.
\;
\For{\textbf{each} $a_i \in \{a_1,a_2,\dots,a_{\Delta'}\}$ \textbf{in parallel}}{
$a_i$ moves to $u_i$\;
\If{$a_i$ finds $\psi(u_i)$ at $u_i$}{
$\psi(u_i).\checked[a_i.\pin] \gets \full$\;
}
$a_i$ goes back to $w$ along with $\psi(u_i)$ (if it exists)\;
Let $p_i$ be a port number such that $N(w,p_i) = u_i$\;
$\psi(w).\checked[p_i] \gets \begin{cases}
\full & \text{if }a_i \text{ found }\psi(u_i) \text{ at }u_i\\
\emp& \text{otherwise},
    \end{cases}$\;
}
}
Let all settlers except for $\psi(w)$ go back to their homes\;
\If{$|\uemp(w)|\ge \ell$}{
All explorers in $A(w)$ go to distinct neighbors in $\uemp(w)$ and 
settle at those nodes.
}
}
\end{algorithm}

\section{For Further Improvement in Time Complexities}
\label{sec:rootopt}


In the previous sections, we introduced nearly time-optimal dispersion algorithms: an $O(k \log \tau)$-time algorithm for the rooted setting and an $O(k \log^2 k)$-time algorithm for the general setting. This raises a crucial question: is it possible to develop a truly time-optimal algorithm, specifically an $O(k)$-time algorithm, even if it requires much more space? In this section, we affirmatively answer this question for the rooted setting. We present an $O(k)$-time algorithm that utilizes $O(\Delta+\log k)$ bits
of space per agent. However, the feasibility of an $O(k)$-time algorithm in the general setting remains open.

We refer to the new algorithm as $\rootopt$ in this section. 
In $\rootal$, with $O(\log (k+\Delta))$ bits of space,
each settler $\psi(w)$ cannot memorize the exact set of settled neighbors of $w$.
Instead, it only remembers the maximum $i$ such that 
the first $i$ neighbors of $w$ are settled.
In contrast, $\rootopt$ allows each settler $\psi(w)$ to remember all settled neighbors of $w$ using $O(\Delta)$ bits,
which significantly helps to eliminate an $O(\log \tau)$ factor from the time complexity.
However, somewhat surprisingly,
both the design of the new algorithm and the analysis of its execution time are non-trivial.

Below, we outline the modifications made to $\rootal$ to obtain $\rootopt$.
We expect that most readers will grasp the behavior of $\rootopt$ simply by reviewing the following key differences,
while we provide the pseudocode in Algorithm \ref{al:root_optimal}.

\begin{itemize}
\item
In $\rootopt$, each settler $\psi(w)$ maintains an array variable $\psi(w).\checked$ of size $\delta_w$. Each element of $\psi(w).\checked[i]$ takes a value from the set $\{0, 1, \bot\}$. The assignment $\psi(w).\checked[i]=0$ (respectively, $1$) indicates that the neighbor $N(w,i)$ is unsettled (respectively, settled). The value $\bot$ is utilized exclusively during the probing process, meaning that the neighbor $N(w,i)$ has yet to be checked for its settled status. For any $c \in \{0,1,\bot\}$, we define $U_{c}(u)=\{N(u,i) \mid i \in [0,\delta_w-1], \psi(w).\checked[i]=c\}$.

\item Consider that the unique leader $\amax$ invokes the probing process $\Probe()$ at a node $w$.
(Remember that $\amax$ is the unique leader that has the maximum identifier at the beginning of the execution.)
In $\rootal$, the probing process terminates as soon as
any agent finds an unsettled neighbor. However, in $\rootopt$, the process only ends when all of $w$'s neighbors are probed or when at least $\ell$ unsettled neighbors are found, where $\ell$ is the number of explorers.
In the former case, the probing process is now complete: $\ubot(w)$ is empty, and $\uemp(w)$ equals the set of unsettled neighbors of $w$.
In the latter case, the explorers go to distinct $\ell$ unsettled nodes 
and settle there, thereby achieving dispersion.
\item
Consider an agent $a$ making a round trip $w \to u \to w$ during $\Probe()$, where $u = N(w,p)$ for some $p \in [0,\delta_w-1]$.
If $a$ does not encounter a settler at $u$, it simply sets $\psi(w).\checked[i]$ to $0$. On the other hand, if a settler is found at $u$, $a$ sets $\psi(w).\checked[i]$ to $1$ and additionally sets $\psi(u).\checked[q]$ to $1$, where $q \in [0,\delta_u-1]$ is the port number such that $w = N(u,q)$.
This modification ensures that $\uemp(u)$ remains equal to the set of unsettled neighbors of $u$ when 
the explorers go back to $u$.
\item
In $\rootal$, the leader $\amax$ invokes $\Probe()$ after each forward or backward move. However, in $\rootopt$, $\amax$ only invokes $\Probe()$ after making a forward move. This change does not compromise the correctness of $\rootopt$ because when $\amax$ makes a backward move to a node $w$, $\psi(w)$ accurately remembers its unsettled neighbors due to the modification mentioned earlier.
\end{itemize}

One might think that the probing process $\Probe()$ in $\rootopt$ could take longer time than in $\rootal$, as it only finishes after all neighbors of the current location have been probed or after finding $\ell$ unsettled neighbors, where $\ell$ is the number of explorers. Particularly, there seems to be a concern that during the probing at a node $w$, the number of agents, excluding $\psi(w)$, may not always double: this event occurs when some agents discover an unsettled neighbor. 
Despite that, we deny this conjecture at least asymptotically, that is, we have the following lemma.
\begin{lemma}
\label{lem:fastprobe}
Assume that $\amax$ invokes $\Probe()$ at node $w$ during the execution of $\rootopt$, and exactly $\ell$ explorers including $\amax$ exists at the time. Then, $\Probe()$ finishes within $O(1) + \max(0, 2\lceil \log \tau - \log \ell \rceil)$ time.
\end{lemma}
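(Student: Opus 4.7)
The plan is to bound the number of iterations of the while loop of $\Probe()$ by $O(1) + \lceil \log_2(\tau/\ell) \rceil$. Since each iteration consumes exactly two time steps (one to dispatch the agents to a batch of neighbors and one to return, possibly escorting settlers) and the post-loop cleanup takes $O(1)$ steps, this yields the stated bound. First I would dispose of the degenerate case $\ell \geq \delta_w$: here $\Delta'_1 = \delta_w$ and the loop terminates in a single iteration. Whenever $\ell \geq \tau$ we are in this case, because $\ell \leq k-1$ combined with $\ell \geq \tau = \min(k,\Delta)$ forces $\tau = \Delta$, and hence $\ell \geq \Delta \geq \delta_w$. So the interesting regime is $\ell < \delta_w$, which in particular implies $\ell \leq \tau$.

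In this regime I would track the cumulative counts $P_i$, $S_i$, $U_i$ of neighbors probed, found settled, and found unsettled after iteration $i$, noting $P_i = S_i + U_i$. The bookkeeping identity driving the analysis is that the number of agents at $w$ (excluding $\psi(w)$) at the start of iteration $i+1$ equals $x_{i+1} = \ell + S_i$, because each agent returning from a settled neighbor escorts the settler back while each agent returning from an unsettled neighbor returns alone. As long as the loop is still running at the start of iteration $i$, the loop condition gives $U_{i-1} \leq \ell - 1$, so $x_i = \ell + P_{i-1} - U_{i-1} \geq P_{i-1} + 1$. Either $x_i \leq \delta_w - P_{i-1}$ and iteration $i$ probes $\Delta'_i = x_i$ new neighbors, yielding $P_i = 2P_{i-1} + (\ell - U_{i-1}) \geq 2P_{i-1} + 1$, or the iteration exhausts the remaining unprobed neighbors and the loop terminates. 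A simple induction starting from $P_1 = \ell$ then gives $P_i \geq 2^{i-1}(\ell + 1) - 1$ while the loop survives iteration $i$.

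The key upper bound I would establish is $P_{i-1} \leq \tau + \ell - 1$ whenever the loop is still running: the total number of settled neighbors of $w$ is at most $\min(\delta_w, k-1) \leq \tau$, so $S_{i-1} \leq \tau$, and $U_{i-1} \leq \ell - 1$ by the loop condition. Combining with $P_{i-1} \geq 2^{i-2}(\ell+1) - 1$ yields $2^{i-2}(\ell+1) \leq \tau + \ell$, hence $i \leq 2 + \log_2((\tau+\ell)/(\ell+1)) \leq 3 + \log_2(\tau/\ell)$ (using $\ell \leq \tau$). The main subtlety I expect to handle with care is the bookkeeping identity $x_{i+1} = \ell + S_i$: it requires that during the probing no agent enters or leaves $w$ except through the round-trips orchestrated by $\Probe()$, and in particular that a settler brought back in an earlier iteration remains at $w$ until the loop finishes rather than returning home on its own. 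In the rooted setting with only one active leader this is immediate, but it merits an explicit remark in the writeup.
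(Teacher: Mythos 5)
Your proof is correct, and it reaches the bound by a genuinely different (and somewhat more elementary) invariant than the paper's. The paper works with the weighted potential $f(r) = X_r\cdot 2^{r-1} + Y_r$, where $X_r$ is the number of unsettled neighbors found so far and $Y_r$ the number of agents at $w$ excluding $\psi(w)$; it shows $f(r+1)\ge 2f(r)$ and then derives a contradiction from $f(z)\ge \ell\cdot 2^z$ together with the assumption that the loop is still running after $z=\lceil\log\tau-\log\ell\rceil+1$ rounds, concluding that more than $\tau$ agents would have to be present at $w$. You instead track the cumulative number $P_i$ of probed ports directly, use the bookkeeping identity $x_{i+1}=\ell+S_i$ and the loop guard $U_{i-1}\le\ell-1$ to get $P_i\ge 2P_{i-1}+1$, and cap $P_{i-1}\le \tau+\ell-1$ a priori; this gives the iteration bound by a direct induction rather than by contradiction. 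The two arguments share the same engine (geometric growth per two-step round, capped by the at most $\tau$ settled neighbors), but your invariant is simpler to state and your treatment of the terminal iteration (the case $\Delta'_i<x_i$, which can only happen once) is cleaner than the paper's, which folds it into the contradiction. The one point you flag as needing care --- that $x_{i+1}=\ell+S_i$ because escorted settlers stay at $w$ until the loop ends and nobody else enters or leaves --- is indeed the right thing to check, and it holds in $\rootopt$ since settlers are sent home only after the while loop (and there is a single leader in the rooted setting). Your handling of the degenerate case $\ell\ge\tau$ (forcing $\tau=\Delta$ because $\ell<k$, hence one iteration suffices) is exactly the paper's.
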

\begin{proof}
If $\ell \ge \tau$, we have $\ell \ge \min(\Delta,k) = \Delta$ because $\ell < k$.
Then, the lemma trivially holds: $\Probe()$ finishes in a constant time.
Thus, we consider the case $\ell < \tau$.
Let $t$ be the time step at which $\amax$ invokes $\Probe()$ at a node $w$,
and let $z = \lceil \log \tau - \log \ell \rceil +1$.
It suffices to show that 
$\ubot(w,t') = \emptyset$ or $|\uemp(w,t')| \ge \ell$ holds
for some $t' \in [t,t+2z+O(1)]$.
Assume for contradiction that this does not hold.
For any $r \in [0,z]$,
we define $f(r) = X_r \cdot 2^{r-1} +Y_r$,
where $X_r = |\uemp(w,t+2r)|$ and $Y_r = |A(w,t+2r)\setminus \{\psi(w)\}|$.
By definition, $f(0)=0\cdot 2^{0-1} + \ell = \ell$.
Under the above assumption, for any $r=0,1,\dots,z-1$, 
the agents in $A(w,t+2r)$ move to distinct neighbors in $\ubot(w,t+2r)$
in time step $t+2r$,
and bring back all settlers they find, at most one for each neighbor,
in time step $t+2r+1$.
Let $\alpha$ be the number of those settlers
\ie $\alpha = Y_{r+1}-Y_r$. 
Note that $X_{r+1}-X_r =Y_r-\alpha$ holds here.
Then, irrespective of $\alpha$,
we obtain
\begin{align*}
f(r+1) &= X_{r+1}\cdot 2^{r} + Y_{r+1} = (X_r+Y_r-\alpha)\cdot 2^{r} + Y_r+\alpha\\
& = X_r\cdot 2^r + (2^r-1)(Y_r-\alpha)+2\cdot Y_r
\ge  2(X_r \cdot 2^{r-1}+Y_r) = 2f(r),
\end{align*}
where we use $2^r \ge 1$ and $Y_r - \alpha = X_{r+1}-X_r \ge 0$ 
in the above inequality.
Therefore, we have $f(z) \ge \ell \cdot 2^z$,
whereas
we have assumed (for contradiction) that $X_z = |\uemp(w,t+2z)|< \ell$,
thus $f(z)=X_z\cdot 2^{z-1}+Y_z \le (\ell - 1)2^{z-1}+Y_z$ holds.
This yields $|A(w,t+2z)\setminus \{\psi(w)\}|=Y_z \ge \ell \cdot 2^z - (\ell - 1)2^{z-1} = (\ell+1) 2^{z-1} \ge (\ell+1)\tau/\ell > \tau$.
Since $\tau = \min(\Delta,k)$, we have $\tau = \Delta$ or $\tau = k$.
In the former case, $Y_k > \Delta$ agents at $w$ are enough to visit all neighbors in $\ubot(w,t+2z)$
in time step $t+2z$, thus $\ubot(w,t+2z+2)=\emptyset$ holds, a contradiction. 
In the latter case, there are $|A(w,t+2z)| \ge k+2$ agents in $w$ at time step $t+2z$, a contradiction. 
Therefore, $\ubot(w,t') =  \emptyset$ or $|\uemp(w,t')|\ge \ell$ holds
at time step $t' \le t+2z+2$.
\end{proof}

\begin{theorem}
\label{theorem:rootopt}
In the rooted setting,
algorithm $\rootopt$ 
solves the dispersion problem within $O(k)$ time using $O(\Delta+\log k)$ bits of space per agent.
\end{theorem}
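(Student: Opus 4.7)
The plan is to establish correctness via an invariant on the $\checked$ arrays, bound the total $\Probe()$ cost via a sum estimate, and tally the persistent memory per agent.

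For correctness, I would prove by induction on time that, for every settled node $w$ and every port $i \in [0, \delta_w - 1]$, $\psi(w).\checked[i]$ equals $\full$ (resp.\ $\emp$) if and only if $N(w,i)$ is currently settled (resp.\ unsettled), except during the brief window when some $\Probe()$ is actively testing port $i$ (when it may take $\bot$). The decisive ingredient is the symmetric update: whenever a probing agent crosses the edge $\{w, u\}$ and finds $\psi(u)$ at $u$, it writes $\full$ into both $\psi(w).\checked[i]$ and $\psi(u).\checked[q]$, where $q$ is the port at $u$ leading back to $w$. Consequently, whenever a node $u$ becomes settled and then runs its own $\Probe()$, the probing agent that reaches $w$ refreshes $\psi(w).\checked[i]$. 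This justifies skipping $\Probe()$ after backward moves: upon $\amax$ returning to $w$, $\uemp(w)$ already equals the current set of unsettled neighbors of $w$. Given this invariant, the main loop is a standard DFS driven by $\uemp(w)$, and dispersion is achieved either via $k - 1$ forward moves culminating in $\amax$ settling alone, or earlier when some $\Probe()$ discovers $|\uemp(w)| \ge \ell$ and the remaining $\ell$ explorers settle simultaneously.

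For the $O(k)$ time bound, I would decompose the running time into forward and backward moves (each a single step, $O(k)$ in total) and the $\Probe()$ invocations. Since $\Probe()$ is invoked only once initially and once after each forward move, there are at most $k$ calls. At the $i$-th call, exactly $i + 1$ agents have settled, so the explorer count is $\ell_i = k - i - 1$, and by Lemma~\ref{lem:fastprobe} this call costs $O(1) + 2\max(0, \lceil \log \tau - \log \ell_i \rceil)$ steps. Re-indexing by $\ell$, the nontrivial part of the total reduces to
\[
\sum_{\ell=1}^{\tau - 1} \lceil \log(\tau/\ell) \rceil,
\]
which by Stirling's estimate $\log(\tau!) = \tau \log \tau - \Theta(\tau)$ is $O(\tau)$. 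Since $\tau \le k$, the total probing cost is $O(k)$, yielding an $O(k)$ overall running time. For the space bound, each agent persistently stores its identifier ($O(\log k)$ bits), the flag $\settled$, the port $\parent$, a home-return port used when temporarily acting as a helper during $\Probe()$ (each $O(\log \Delta)$ bits), and the length-$\delta_v$ array $\checked$ with $\{\full, \emp, \bot\}$-valued entries ($O(\Delta)$ bits), totaling $O(\Delta + \log k)$ bits per agent.

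The main obstacle I anticipate is the correctness invariant rather than the running-time analysis: I must verify that for every node $w$ and every change in the settled status of a neighbor, $\psi(w).\checked[\cdot]$ is refreshed before $\amax$ next consults $\uemp(w)$. This is delicate when a neighbor $u$ of $w$ is discovered along a DFS branch that does not pass through $w$; it relies on the $\Probe()$ performed immediately after $u$ is settled actually reaching $w$ before $\amax$ returns (or, failing that, on dispersion occurring earlier via the $|\uemp(w)| \ge \ell$ branch). The time-sum estimation is routine but quantitatively tight: a loose geometric bound gives only $O(k \log \tau)$, and the Stirling-type cancellation above is needed to shave it to $O(k)$.
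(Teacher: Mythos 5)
Your proposal is correct and follows essentially the same route as the paper's proof: it charges each of the $O(k)$ calls to $\Probe()$ via Lemma~\ref{lem:fastprobe}, collapses the resulting sum with Stirling's formula to obtain $O(k)$ total probing time, adds $O(k)$ for the forward and backward moves, and attributes the $O(\Delta)$ space to the $\checked$ array. The only differences are cosmetic: you make explicit the correctness invariant on the $\checked$ arrays (including why the symmetric update and the early $|\uemp(w)|\ge\ell$ exit cover the case of a neighbor settled along another DFS branch), which the paper handles informally in its description of the algorithm's modifications, and you truncate the sum at $\ell=\tau-1$ rather than $k-1$.
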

\begin{proof}
The unique leader $\amax$ invokes $\Probe()$ only when it settles an agent, except for when $\amax$ itself becomes settled. Therefore, $\amax$ invokes $\Probe$ exactly $k-1$ times, with precisely $k-i$ explorers present at the $i$-th invocation. By Lemma \ref{lem:fastprobe}, the total number of steps required for the $k-1$ executions of $\Probe()$ is at most
$\sum_{\ell= 1}^{k-1} (\log k - \log \ell+O(1)) = k \log k - (\log (k!)- \log k) + O(k) = O(k)$,
where we apply Stirling's formula, \ie $\log (k!) = k \log k - k +O(\log k)$.
As demonstrated in Section \ref{sec:rooted}, both forward and backward moves also require a total time of $O(k)$. Thus, $\rootopt$ completes in $O(k)$ time. Regarding space complexity, the array variable $\checked$ is the primary factor, needing $O(\Delta)$ bits per agent. Other variables require only $O(\log \Delta)$ bits.
\end{proof}

\section{Concluding Remarks}
\label{sec:discussion}
 In this paper, we introduced a novel technique HEO-DFS and presented near time-optimal dispersion algorithms based on this techniques both for the rooted setting and the general setting. Both the algorithms require only $O(\log (k+\Delta)$ bits per agent. Moreover, 
 we give time-optimal algorithms that solves dispersion in the rooted setting at the cost of increasing the memory space to $O(\Delta + \log k)$ bits per agent. 

It is worth mentioning that while HEO-DFS does not function in a \emph{fully} asynchronous model, where movement between two nodes may require an unbounded period, it does not require a \emph{fully} synchronous model in the rooted setting. Specifically, $\rootal$ and $\rootopt$ can operate under an asynchronous scheduler if every movement of agents between nodes is \emph{atomic}, \ie each agent is always located at a node and never on an edge at any time step. Under this scheduler, after the probing process is completed at a node $v$, unsettled agents can wait for all helping settlers to leave $v$ before they themselves depart. Since every movement is atomic, when unsettled agents visit the home node of one of these settlers, the settler has already returned. Thus, $\rootal$ and $\rootopt$ functions under any fair scheduler that guarantees every movement is atomic. However, this move-atomicity is not sufficient for $\general$, because this algorithm, designed for the general setting, differentiates the moving speeds of agents based on their roles -- leader, strong zombie, or weak zombie -- which inherently requires a fully synchronous scheduler.


\clearpage

\bibliography{dispersion}

\clearpage

\appendix


\section{How to Terminate \texorpdfstring{$\rootal$}{RootedDispersion} and \texorpdfstring{$\rootopt$}{RootOptimal}}
\label{sec:termination}
The termination can be easily implemented in a common way both for $\rootal$ and $\rootopt$.

The last settled agent $\alast$ recognizes the achievement of dispersion immediately after all agents are settled at distinct nodes. In $\rootopt$, the last settled agent $\alast$ may not be uniquely determined when the unsettled agents at node $w$ detect that $|\uemp(w)| \ge \ell$, where $\ell$ is the number of those agents, and they move to and settle at $\ell$ distinct nodes (line 33 in Algorithm \ref{al:root_optimal}). In this case, those $\ell$ agents immediately terminate and are not used for the subsequent termination detection. Instead, we define $\alast$ as $\psi(w)$, who settled at $w$ before those $\ell$ agents.

Roughly speaking, $\alast$ then relays the fact that dispersion is achieved to all remaining agents via edges in the DFS tree, say $T$, that the agents have constructed so far. For any settled node $v \neq w = \lctn(\amax)$, we define $f(v)$ as the next node of $v$ in the path leading to $w$ in tree $T$. Formally, we define $f(v) = u_1$, where $p = u_0, u_1, \dots, u_s\ (v = u_0,\ w = u_s)$ is the unique path between $v$ and $w$.
Like the time slots in $\general$, we divide time steps into three time slots: 1, 2, and 3. Specifically, we classify time step $t$ as slot $i \in \{1, 2, 3\}$ if $(t \bmod 3) + 1 = i$. In every time slot 1, each agent moves or stays according to the original algorithms, \ie Algorithms \ref{al:rootal} and \ref{al:root_optimal} for $\rootal$ and $\rootopt$, respectively. In slot 2, each settled agent $b$ with home $\lctn(b)$ moves to $f(\lctn(b))$, and in slot 3, it goes back to $\lctn(b)$. In slots 2 and 3, no unsettled agent moves.
After $\alast$ detects the achievement of dispersion at time step $t$, which must be slot 1, it does not terminate immediately. It waits for other settlers coming to $w$ in the next slot 2, tells them about the achievement of dispersion in slot 3, and halts in the next slot 1. The settlers who newly learn this fact go back to their homes in slot 3 and execute the same procedure: wait until the next slot 3, inform the arriving settlers about the achievement, and halt in the next slot 1. Since the diameter of the DFS tree $T$ is at most $k - 1$, this termination procedure only requires $O(k)$ time.

Each settler $b$ with home $v = \lctn(b)$ must recognize which ports $0, 1, \dots, \delta_{v} - 1$ lead to node $f(v)$. This recognition can be easily implemented. For $\rootal$, settler $b$ knows $f(v) = N(v, b.\nxt)$ if $b.\nxt \neq \bot$, and $f(v) = N(v, b.\parent)$ otherwise. 
We slightly modify $\rootopt$ for this recognition:
(i) settler $b$ maintains a variable $b.\nxt$,
(ii) $b$ sets $b.\nxt \gets u$ in Line 6, and 
(iii) $b$ sets $b.\nxt \gets b.\parent$ in Line 11.
Then, $b$ always knows $f(v) = N(v, b.\nxt)$.

\end{document}